\newtheorem{corollary}{Corollary}
\newtheorem{definition}{Definition}
\newtheorem{proposition}{Proposition}
\newtheorem{lemma}{Lemma}
\begin{document}

\title{Distance verification for classical and quantum LDPC codes}
\date{\today}
\author{Ilya Dumer,~\thanks{Ilya Dumer is with the Department of Electrical and
Computer Engineering, University of California, Riverside, California, 92521,
USA; (e-mail: dumer@ee.ucr.edu).} Alexey A. Kovalev,\thanks{Alexey A. Kovalev
is with the Department of Physics and Astronomy, University of
Nebraska---Lincoln, USA; (e-mail: alexey.kovalev@unl.edu). Research supported
in part by the NSF under Grant No. PHY-1415600.} and Leonid P.~Pryadko\thanks{
Leonid P.~Pryadko is with the Department of Physics \& Astronomy, University
of California, Riverside, California, 92521, USA (e-mail: leonid@ucr.edu).
Research supported in part by the U.S. Army Research Office under Grant No.
W911NF-14-1-0272 and by the NSF under Grant No. PHY-1416578.} \thanks{The
material of this paper was presented in part at the 2014 IEEE Symp. Info.
Theory, Honolulu, HI, USA, July 1-6, 2014 and at the 2016 IEEE Symp. Info.
Theory, Barcelona, Spain, USA, July 4-8, 2016.}}
\maketitle

\begin{abstract}
The techniques of distance verification known for general linear codes are
re-applied to quantum stabilizer codes. Then distance verification is
addressed for classical and quantum LDPC codes. New complexity bounds for
distance verification with provable performance are derived using the average
weight spectra of the ensembles of LDPC codes. These bounds are expressed in
terms of the erasure-correcting capacity of the corresponding ensemble. We
also present a new irreducible-cluster technique that can be applied to any
LDPC code and takes advantage of parity-checks' sparsity for both classical
and quantum LDPC codes. This technique reduces complexity exponents of all
existing deterministic techniques designed for generic stabilizer codes with
small relative distances, which also include all known families of quantum
LDPC codes.

\textbf{Index Terms} -- Distance verification, complexity bounds, quantum
stabilizer codes, LDPC codes, erasure correction \vspace{-0.15in}

\end{abstract}

\vspace{-0.15in}

\section{Introduction}

Quantum error correction (QEC)
\cite{shor-error-correct,Knill-Laflamme-1997,Bennett-1996} is a critical part
of quantum computing due to the fragility of quantum states. Two related code
families, surface (toric) quantum codes
\cite{kitaev-anyons,Dennis-Kitaev-Landahl-Preskill-2002} and topological color
codes \cite{Bombin-MartinDelgado-2006,Bombin-2007, Bombin-MartinDelgado-2007},
have been of particular interest in quantum design
\cite{Bombin-MartinDelgado-2007}, \cite{Raussendorf-Harrington-2007}. Firstly,
these codes only require simple local gates for quantum syndrome measurements.
Secondly, they efficiently correct some non-vanishing fraction of errors,
below a fault-tolerant threshold of about 1\% per gate. Unfortunately,
locality limits such codes to an asymptotically zero code rate
\cite{Bravyi-Poulin-Terhal-2010} and makes a useful quantum computer
prohibitively large. Therefore, there is much interest in feasible quantum
coding with no local restrictions.

Low-density-parity-check (LDPC) codes
\cite{Postol-2001,MacKay-Mitchison-McFadden-2004} form a more general class of
quantum codes. These codes assume no locality but only require low-weight
stabilizer generators (parity checks). Unlike locally-restricted codes, they
also achieve a finite code rate along with a non-zero error probability
threshold, both in the standard setting, and in a fault-tolerant setting, when
syndrome measurements include errors
\cite{Kovalev-Pryadko-FT-2013,Dumer-Kovalev-Pryadko-bnd-2015}. However,
quantum LDPC codes are still much inferior to their classical counterparts.
Namely, all existing quantum LDPC codes with bounded stabilizer weight
\cite{Tillich-Zemor-2009,Zemor-2009,Couvreur-Delfosse-Zemor-2012,
Kovalev-Pryadko-2012,Andriyanova-Maurice-Tillich-2012,
Kovalev-Pryadko-Hyperbicycle-2013,
Bravyi-Hastings-2013,Guth-Lubotzky-2014,Freedman-Meyer-Luo-2002} have code
distances $d$ that scale at most as $\sqrt{n\ln n}$ in length $n$, unlike
linear scaling in the classical LDPC codes. Many of the existing quantum
constructions also exhibit substantial gaps between the upper and lower bounds
for their distances $d$. In particular, the recent quantum design of
\cite{Bravyi-Hastings-2013} yields the orders of $n$ and $\sqrt{n}$ for these
bounds. Finding the exact distances of such codes is thus an important open problem.

This paper addresses various numerical algorithms that verify code distance
with provable performance for the classical LDPC codes, quantum stabilizer
codes, and quantum LDPC codes. Given some ensemble of codes, we wish to verify
code distances for most codes in this ensemble with an infinitesimal
probability of failure. In particular, we will discuss deterministic
algorithms that yield no failures for most codes in a given ensemble. We also
address probabilistic algorithms that have a vanishing probability of failure.
This high-fidelity setting immediately raises important complexity issues.
Indeed, finding the code distance of a generic code is an NP-hard problem.
This is valid for both the exact setting \cite{Vardy-1997} and the evaluation
problem \cite{Dumer-2003,Cheng-2009}, where we only verify if $d$ belongs to
some interval $[\delta,c\delta]$ for a given constant $c\in(1,2)$. In this
regard, we note that all algorithms discussed below still have exponential
complexity in block length $n,$ if the average code distance grows linearly in
a given ensemble. Below, we consider both binary and $q$-ary codes and wish to
achieve the lowest exponential complexity $q^{Fn}$ for distance verification
of classical or quantum LDPC codes.

We analyze complexity exponents $F$ in three steps. Section \ref{sec:generic}
establishes a framework for generic quantum codes. To do so, we revisit
several algorithms known for classical linear codes. Then we re-apply these
techniques for quantum stabilizer codes. Given the complexity benchmarks of
Section \ref{sec:generic}, we then address binary LDPC codes in Section
\ref{sec:LDPC}. Here we can no longer use the generic properties of random
generator (or parity-check) matrices. Therefore, we modify the existing
algorithms to include the LDPC setting. In particular, we show that only a
vanishing fraction of codes may have atypically high complexity. These codes
are then discarded. As a result, we re-define known complexity estimates in
terms of two parameters: the average code distance and the erasure-correcting
capacity of a specific code ensemble. To estimate this capacity, we use the
average weight spectra, which were derived in \cite{Gallager} for the original
ensemble of LDPC codes and in \cite{Litsyn-2002} for a few other LDPC
ensembles. Our complexity estimates hold for any ensemble given its
erasure-correcting capacity or some lower bound. More generally, these
algorithms perform list decoding within distance $d$ from any received vector
$y,$ whereas distance verification does so for $y=0.$

Here, however, we leave out some efficient algorithms that require more
specific estimates. In particular, we do not address belief propagation (BP)
algorithms, which can erroneously end when they meet stopping sets, and
therefore fail to furnish distance verification with an arbitrarily high
likelihood. Despite this, the simulation results presented in  papers
\cite{Declercq-Fossorier-2008} and \cite{Hu-Fossorier-Eleftheriou-ISIT-2004}
show that list decoding BP algorithms can also be effective in distance verification.

In Section \ref{sec:LC-LDPC}, we consider quantum stabilizer LDPC codes. These
codes use some self-orthogonal quaternary code $C$ and its dual $C^{\bot}$.
This self-orthogonality separates quantum LDPC codes from their conventional
counterparts. One particular difference is a low relative distance of the
existing constructions, the other is a substantial number of short cycles in
their graphical representation. The latter fact also complicates BP
algorithms. For these reasons, our goal is to design new algorithms that are
valid for any LDPC code including any quantum code. To do so, we use the fact
that verification algorithms may seek only irreducible
\cite{Dumer-Kovalev-Pryadko-bnd-2015} codewords that cannot be separated into
two or more non-overlapping codewords. This approach yields a cluster-based
algorithm that exponentially reduces the complexity of all known deterministic
techniques for sufficiently small relative distance $d/n$, which is the case
for the existing families of quantum LDPC codes. This algorithm also
generalizes the algorithm of \cite{Dumer-Kovalev-Pryadko-bnd-2015} for
nonbinary LDPC codes.

Consider a $q$-ary $(\ell,m)$-\emph{regular} LDPC code, which has $\ell$
non-zero symbols in each column and $m$ non-zero symbols in each row of its
parity-check matrix. Let $h_{2}(x)$ be the binary entropy of $x\in
\lbrack0,1].$ Our main results are presented in Propositions \ref{prop:CS-1}
and \ref{prop:IC-1} and can be summarized as follows.

\begin{proposition}
\label{prop:1-1}Consider any permutation-invariant ensemble $\mathbb{C}$ of
$q$-ary linear codes with relative distance {$\delta_{\ast}.$ Let }%
$\theta_{\ast}$ denote the expected erasure-correcting capacity for codes
$C\in\mathbb{C}$. For most codes $C\in\mathbb{C}$, the code distance
{$\delta_{\ast}n$ }can be verified with complexity of order $2^{Fn}$, where
$F=h_{2}(\delta_{\ast})-{\theta}_{\ast}h_{2}(\delta_{\ast}/{\theta}_{\ast}).$
For any $q$-ary $(\ell,m)$-regular LDPC code (classical or quantum), the code
distance {$\delta_{\ast}n$ }can be verified with complexity of order $2^{Fn}$,
where $F=\delta_{\ast}\log_{2}(\gamma_{m}(m-1))$ and $\gamma_{m}$ grows
monotonically with $m$ in the interval $(1,\left(  q-1\right)  /\ln q).$
\end{proposition}

\section{Background}

Let $C[n,k]_{q}$ be a $q$-ary linear code of length $n$ and dimension $k$ in
the vector space $\mathbb{F}_{q}^{n}$ over the field $\mathbb{F}_{q}$. This
code is specified by the parity check matrix $H$, namely $C=\{\mathbf{c}%
\in\mathbb{F}_{q}^{n}|H\mathbf{c}=0\}$. Let $d$ denote the Hamming distance of
code $C$.

A quantum $[[n,k]]$ stabilizer code $Q$ is a $2^{k}$-dimensional subspace of
the $n$-qubit Hilbert space $\mathbb{H}_{2}^{\otimes n}$, a common $+1$
eigenspace of all operators in an Abelian \emph{stabilizer group}
$\mathscr{S}\subset\mathscr{P}_{n}$, $-\openone\not \in \mathscr{S}$, where
the $n$-qubit Pauli group $\mathscr{P}_{n}$ is generated by tensor products of
the $X$ and $Z$ single-qubit Pauli operators. The stabilizer is typically
specified in terms of its generators, $\mathscr{S}=\left\langle S_{1}%
,\ldots,S_{n-k}\right\rangle $; measuring the generators $S_{i}$ produces the
\emph{syndrome} vector. The weight of a Pauli operator is the number of qubits
it affects. The distance $d$ of a quantum code is the minimum weight of an
operator $U$ which commutes with all operators from the stabilizer
$\mathscr{S}$, but is not a part of the stabilizer, $U\not \in \mathscr{S}$.

A Pauli operator $U\equiv i^{m}X^{\mathbf{v}}Z^{\mathbf{u}}$, where
$\mathbf{v},\mathbf{u}\in\{0,1\}^{\otimes n}$ and $X^{\mathbf{v}}=X_{1}%
^{v_{1}}X_{2}^{v_{2}}\ldots X_{n}^{v_{n}}$, $Z^{\mathbf{u}}=Z_{1}^{u_{1}}%
Z_{2}^{u_{2}}\ldots Z_{n}^{u_{n}}$, can be mapped, up to a phase, to a
quaternary vector, $\mathbf{e}\equiv\mathbf{u}+\omega\mathbf{v}$, where
$\omega^{2}\equiv\overline{\omega}\equiv\omega+1$. A product of two quantum
operators corresponds to the sum ($\bmod\,2$) of the corresponding vectors.
Two Pauli operators commute if and only if the \emph{trace inner product}
$\mathbf{e}_{1}\ast\mathbf{e}_{2}\equiv\mathbf{e}_{1}\cdot\overline
{\mathbf{e}}_{2}+\overline{\mathbf{e}}_{1}\cdot\mathbf{e}_{2}$ of the
corresponding vectors is zero, where $\overline{\mathbf{e}}\equiv
\mathbf{u}+\overline{\omega}\mathbf{v}$. With this map, an $[[n,k]]$
stabilizer code $Q$ is defined by $n-k$ generators of a stabilizer group,
which generate some \emph{additive} self-orthogonal code $C$ of size $2^{n-k}$
over $\mathbb{F}_{4}$. \cite{Calderbank-1997}. The vectors of code $C$
correspond to stabilizer generators that act trivially on the code; these
vectors form the \emph{degeneracy group} and are omitted from the distance
calculation. For this reason, any stabilizer code $Q$ has a code distance
\cite{Calderbank-1997} that is defined by the minimum non-zero weight in the
code $C^{\bot}\setminus C$.

An LDPC code, quantum or classical, is a code with a sparse parity check
matrix. A huge advantage of classical LDPC codes is that they can be decoded
in linear time using iterative BP algorithms
\cite{Gallager-1962,MacKay-book-2003}. Unfortunately, this is not necessarily
the case for quantum LDPC codes, which have many short cycles of length four
in their Tanner graphs. In turn, these cycles cause a drastic deterioration in
the convergence of the BP algorithm \cite{Poulin-Chung-2008}. This problem can
be circumvented with specially designed quantum codes
\cite{Kasai-Hagiwara-Imai-Sakaniwa-2012,Andriyanova-Maurice-Tillich-2012}, but
a general solution is not known.

\section{Generic techniques for distance verification \label{sec:generic}}

The problem of verifying the distance $d$ of a linear code (finding a
minimum-weight codeword) is related to a more general list decoding problem:
find all or some codewords at distance $d$ from the received vector. As
mentioned above, the number of operations $N$ required for distance
verification can be usually defined by some positive exponent $F=\overline
{\lim}$ ($\log_{q}N)/n$ as $n\rightarrow\infty$. For a linear $q$-ary code
with $k$ information qubits, one basic decoding algorithm inspects all
$q^{Rn}$ distinct codewords, where $R=k/n$ is the code rate. Another basic
algorithm stores the list of all $q^{n-k}$ syndromes and coset leaders. This
setting gives (space) complexity $F=1-R$. We will now survey some techniques
that are known to reduce the exponent $F$ for linear codes and re-apply these
techniques for quantum codes. For classical codes, most results discussed
below are also extensively covered in the literature (including our citations
below). In particular, we refer to \cite{Barg-1998} for detailed proofs.

\subsection{Sliding window (SW) technique\label{sec:sliding}}

Consider ensemble $\mathbb{C}$ of linear codes $C[n,k]$ generated by the
randomly chosen $q$-ary $(Rn\times n)$ matrices $G$. It is well known that for
$n\rightarrow\infty,$ most codes in ensemble $\mathbb{C}$ have full dimension
$k=Rn$ and meet the asymptotic GV bound $R=1-h_{q}(d/n)$, where
\begin{equation}
h_{q}(x)=x\log_{q}(q-1)-x\log_{q}x-(1-x)\log_{q}(1-x)\label{eq:Hq}%
\end{equation}
is the $q$-ary entropy function. We use notation $c_{I}$ and $C_{I}$ for any
vector $c$ and any code $C$ punctured to some subset of positions $I.$
Consider a \emph{sliding window} $I(i,s)$, which is the set of $s$ cyclically
consecutive positions beginning with $i=0,\ldots,n-1$. \ It is easy to verify
that most random $q$-ary codes $C\in\mathbb{C}$ keep their full dimension $Rn$
on all $n$ subsets $I(i,s)$ of length $s=k+2\left\lfloor \log_{q}%
n\right\rfloor $. Let $\mathbb{C}_{s}$ be such a sub-ensemble of codes
$C\in\mathbb{C}$. Most codes $C\in\mathbb{C}_{s}$ also meet the GV bound,
since the remaining codes in $\mathbb{C\smallsetminus C}_{s}$ form a vanishing
fraction of ensemble $\mathbb{C}.$ Also, $\mathbb{C}_{s}$ includes all cyclic
codes$.$ We now consider the following SW technique of \cite{Evseev-1983}.

\begin{proposition}
\label{prop:SW}\cite{Evseev-1983} The code distance $\delta n$ of any linear
$q$-ary code $C[n,Rn]$ in the ensemble $\mathbb{C}_{s}$ can be found with
complexity $q^{nF_{\mathrm{C}}},$ where
\begin{equation}
F_{\mathrm{C}}=Rh_{q}(\delta)\label{sw0}%
\end{equation}
For most codes $C\in\mathbb{C}_{s},$ the complexity exponent is $F^{\ast
}=R(1-R).$
\end{proposition}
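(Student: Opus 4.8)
The statement claims two things: (i) for every code $C[n,Rn]$ in the sliding-window-friendly sub-ensemble $\mathbb{C}_s$, the distance $\delta n$ can be computed with complexity $q^{nF_{\mathrm{C}}}$, $F_{\mathrm{C}}=Rh_q(\delta)$; and (ii) for a typical code this exponent specializes to $R(1-R)$. I would prove (i) by the following reduction. Fix a minimum-weight codeword $c$ of weight $w=\delta n$. A random cyclic shift places its support so that some sliding window $I(i,s)$ of length $s=k+2\lfloor\log_q n\rfloor$ captures "many" of its nonzero coordinates — more precisely, a window of relative length $s/n\approx R$ contains, for a suitable shift $i$, at least a fraction $\approx R$ of the $w$ nonzero positions (this is a pigeonhole/averaging statement over the $n$ shifts, since the expected number of support coordinates in a random window of length $s$ is $ws/n$). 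Since $C\in\mathbb{C}_s$, the puncturing map $C\to C_{I(i,s)}$ is injective: $c$ is uniquely determined by $c_{I(i,s)}$, and hence by the at-most-$Rw$ nonzero entries it has inside the window (together with the knowledge that the rest of $c_{I(i,s)}$ is zero). So the search strategy is: for each of the $n$ windows $I(i,s)$, enumerate all vectors supported on $I(i,s)$ of weight at most $\lceil Rw\rceil$, extend each uniquely to a codeword of $C$ by solving the linear system imposed by the injective puncturing (Gaussian elimination, polynomial cost), and record the minimum weight found. Correctness: the true minimum-weight $c$ is found when we hit the right shift and the right within-window pattern. Complexity: the number of enumerated patterns is $\binom{s}{\le Rw}(q-1)^{Rw}$, and $\frac1n\log_q$ of this is, to leading order, $\frac{s}{n}\,h_q\!\big(\frac{Rw}{s}\big)=Rh_q(\delta/1)\cdot(1+o(1))$ after substituting $s/n\to R$ and $Rw/s\to Rw/(Rn)=\delta$; the factor-$n$ loop over shifts and the polynomial Gaussian-elimination cost are absorbed into the $q^{o(n)}$ slack. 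This gives $F_{\mathrm{C}}=Rh_q(\delta)$.

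For part (ii) I would simply substitute the generic behavior of $\mathbb{C}_s$ recorded just above the proposition: most codes in $\mathbb{C}$ (hence in $\mathbb{C}_s$, since the complement is a vanishing fraction) meet the GV bound $R=1-h_q(\delta)$, i.e. $h_q(\delta)=1-R$. Plugging into $F_{\mathrm{C}}=Rh_q(\delta)$ yields $F^\ast=R(1-R)$ immediately, which is the asserted typical exponent.

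**Main obstacles.** The delicate point is not the search itself but the two "most codes" claims and the window-length bookkeeping. First, one must verify that with $s=k+2\lfloor\log_q n\rfloor$ the punctured code $C_{I(i,s)}$ really does keep full dimension $Rn$ for \emph{all} $n$ cyclic windows simultaneously on all but a vanishing fraction of random codes — this is a union bound over $n$ windows against the probability $q^{-(s-k)}=q^{-2\lfloor\log_q n\rfloor}\le n^{-2}$ that a fixed $Rn\times s$ submatrix of a random generator matrix drops rank, which is exactly why the $2\lfloor\log_q n\rfloor$ cushion is chosen; I would spell this out. Second, the averaging argument that produces a window catching a $\ge R$-fraction of the support needs $s/n\ge R$ up to the $o(1)$ correction, which holds since $s/n=R+2\lfloor\log_q n\rfloor/n=R+o(1)$; one should be slightly careful that the enumeration bound $\binom{s}{\le \lceil Rw\rceil}(q-1)^{\lceil Rw\rceil}$ has exponent $(s/n)h_q(\lceil Rw\rceil/s)\to Rh_q(\delta)$ by continuity of $h_q$ and the fact that $\lceil Rw\rceil/s\to R\delta/R=\delta$. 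All of this is routine asymptotic analysis; the only genuinely structural input is the injectivity of puncturing on length-$s$ windows, which is what the definition of $\mathbb{C}_s$ guarantees and which makes the "extend uniquely to a codeword" step well-defined and cheap.
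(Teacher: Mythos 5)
Your proposal follows essentially the same route as the paper's proof: slide cyclic windows of length $s=k+2\lfloor\log_q n\rfloor$, use the full-rank (injective puncturing) property defining $\mathbb{C}_s$ to re-encode each low-weight window pattern to a full codeword, bound the enumeration by $(q-1)^{\delta s}\binom{s}{\delta s}\sim q^{Rh_q(\delta)n}$ up to polynomial factors, and substitute the GV bound $h_q(\delta)=1-R$ to get $F^{\ast}=R(1-R)$ for typical codes. The one wording fix needed is the direction of your averaging claim: correctness requires a window containing \emph{at most} roughly $Rw$ support positions (so that $c_{I(i,s)}$ lies inside the enumerated set), not ``at least a fraction $\approx R$''; averaging over the $n$ shifts (or the paper's argument that the window weight changes by at most one per shift, hence some window attains exactly $\lfloor ds/n\rfloor$) gives exactly this, and your later phrase ``the at-most-$Rw$ nonzero entries'' is the statement your algorithm actually uses.
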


\begin{proof}
Given a code $C,$ we first verify if $C\in\mathbb{C}_{s},$ which requires
polynomial complexity. For such a code $C$, consider a codeword $c\in C$ of
weight $d=1,2,\ldots$. The weight of any vector $c_{I(i,s)}$ can change only
by one as $i+1$ replaces $i.$ Then some vector $c_{I(i,s)}$ of length $s$ has
the average Hamming weight $v\equiv\left\lfloor ds/n\right\rfloor $. Consider
all
\[
L=n(q-1)^{v}\textstyle{\binom{s}{v}}%
\]
vectors $c_{I(i,s)}$ of weight $v$ on each window $I(i,s)$. Then we use each
vector $c_{I(i,s)}$ as an information set and encode it to the full length
$n.$ The procedure stops if some encoded vector $c$ has weight $d$. This gives
the overall complexity $Ln^{2},$ which has the order of $q^{nF_{\mathrm{C}}}$
of (\ref{sw0}). For codes that meet the GV bound, this gives exponent
$F^{\ast}$.
\end{proof}

\textit{Remarks.} More generally, the encoding of vectors $c_{I(i,s)}$
represents erasure correction on the remaining $n-s$ positions. We use this
fact in the sequel for LDPC codes. Also, any error vector of weight $d$
generates vector $u$ of weight $v$ on some window $I=I(i,s)$. Thus, we can
subtract any error vector $u$ from the received vector $y_{I(i,s)}$ and
correct $d$ errors in code $C.$

We now proceed with ensemble $\mathbb{Q}$ of quantum stabilizer codes
$\mathcal{Q}$ $[[n,Rn]].$ Most of these codes meet the quantum GV bound
\cite{Ekert-Macchiavello-1996,Feng-Ma-2004}
\begin{equation}
R=1-2h_{4}(\delta)\label{eq:GV}%
\end{equation}
Any code $Q$ is defined by the corresponding additive quaternary code
$C^{\bot}$ and has the minimum distance $d(Q)=d(C^{\bot}\setminus C).$ Let
$\mathbb{Q}_{s}$ denote the ensemble of codes $Q,$ for which $C^{\bot}%
\in\mathbb{C}_{s}$. Note that $\mathbb{Q}_{s}$ includes most stabilizer codes.

\begin{corollary}
\label{cor:SW}The code distance $\delta n$ of any quantum stabilizer code $Q[[n,Rn]]$ in the ensemble $\mathbb{Q}_{s}$ can be found with complexity
$2^{nF_{\mathrm{SW}}},$ where
\begin{equation}
F_{\mathrm{SW}}=(1+R)h_{4}(\delta)\label{1}%
\end{equation}
For most codes in ensemble $\mathbb{Q}_{s},$ code distances $d$ can be found
with the complexity exponent
\begin{equation}
F_{\mathrm{SW}}^{\ast}=(1-R^{2})/2\label{eq:SW-GV-complexity}%
\end{equation}

\end{corollary}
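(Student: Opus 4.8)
The plan is to derive Corollary \ref{cor:SW} directly from Proposition \ref{prop:SW} by applying the sliding-window bound to the quaternary code $C^{\bot}$ associated with the stabilizer code $Q$. Recall that $Q[[n,Rn]]$ has distance $d(Q)=d(C^{\bot}\setminus C)$, where $C$ is additive self-orthogonal of size $2^{n-k}$ and $C^{\bot}$ is additive of size $2^{n+k}$ over $\mathbb{F}_4$; thus, viewed as an (additive) code of length $n$ over $\mathbb{F}_4$, $C^{\bot}$ has ``rate'' $(n+k)/(2n)=(1+R)/2$ in base $4$. Since $Q\in\mathbb{Q}_s$ precisely means $C^{\bot}\in\mathbb{C}_s$, Proposition \ref{prop:SW} applies with $q=4$ and effective rate $R'=(1+R)/2$, yielding a minimum-weight vector of $C^{\bot}$ with complexity $4^{nR'h_4(\delta)}=2^{2nR'h_4(\delta)}=2^{n(1+R)h_4(\delta)}$, which is the claimed $F_{\mathrm{SW}}$ in (\ref{1}).

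The key steps, in order, are: (i) confirm that the sliding-window argument of Proposition \ref{prop:SW} goes through verbatim for \emph{additive} codes over $\mathbb{F}_4$ — the proof only uses that puncturing to a window $I(i,s)$ of length $s=k_{\mathrm C^{\bot}}+2\lfloor\log_q n\rfloor$ preserves the dimension (so the window is an information set), that the weight of $c_{I(i,s)}$ changes by at most one as $i\to i+1$ (true for any code), and that encoding from an information set back to length $n$ is polynomial-time (true for additive codes, via linear algebra over $\mathbb{F}_2$); (ii) substitute $R'=(1+R)/2$ and $q=4$ into $F_{\mathrm C}=R'h_q(\delta)$ and simplify to get (\ref{1}); (iii) address the degeneracy subtlety: Proposition \ref{prop:SW} finds the true minimum nonzero weight of $C^{\bot}$, whereas $d(Q)$ is the minimum weight over $C^{\bot}\setminus C$. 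For most codes these coincide since the minimum-weight codewords of $C^{\bot}$ generically lie outside $C$ (indeed $d(C)\ge d(C^{\bot})$ is expected for random self-orthogonal ensembles), but to be safe one enumerates \emph{all} minimum-weight vectors produced over the windows and discards any that lie in $C$, which only multiplies the complexity by a polynomial factor; (iv) for the ``most codes'' statement, substitute the quantum GV bound $R=1-2h_4(\delta)$, i.e. $h_4(\delta)=(1-R)/2$, into (\ref{1}) to obtain $F_{\mathrm{SW}}^{\ast}=(1+R)(1-R)/2=(1-R^2)/2$, matching (\ref{eq:SW-GV-complexity}).

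The main obstacle I anticipate is step (iii): cleanly arguing that restricting to $C^{\bot}\setminus C$ does not increase the search cost. The cleanest route is to observe that the sliding-window procedure, run for target weight $w=1,2,\ldots,d$, enumerates a superset of a spanning set of low-weight vectors of $C^{\bot}$, so we can filter each candidate $c$ by the (polynomial) membership test $c\in C$ and simply continue the search until the first weight $d$ at which some candidate outside $C$ appears; this is exactly $d(Q)$, and the total work is still $O(L n^2)$ up to a polynomial factor, hence still $2^{nF_{\mathrm{SW}}}$. A secondary, purely bookkeeping point is verifying that $\mathbb{Q}_s$ really does contain most stabilizer codes — this follows because $\mathbb{C}_s$ contains all but a vanishing fraction of $\mathbb{C}$ (as established in the text preceding Proposition \ref{prop:SW}) and the map $Q\mapsto C^{\bot}$ carries the generic stabilizer ensemble to the generic quaternary ensemble.
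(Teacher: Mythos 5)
Your proposal is correct and follows essentially the same route as the paper: apply the SW technique of Proposition \ref{prop:SW} to the quaternary code $C^{\bot}$ with effective rate $R'=(1+R)/2$ and $q=4$, handle degeneracy by discarding candidate codewords that lie in $C$ (a polynomial membership test), and substitute the quantum GV bound $h_4(\delta)=(1-R)/2$ to get $F_{\mathrm{SW}}^{\ast}=(1-R^2)/2$. Your extra checks in steps (i) and (iii) simply spell out details the paper states in one line each.
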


\begin{proof}
For any quantum stabilizer code $Q[[n,k]],$ we apply the SW procedure to
the quaternary code $C^{\bot}$. Since code $C$ has size $2^{n-k}$ in the space
$\mathbb{F}_{4}^{n}$, its dual $C^{\bot}$ has the effective code rate
\footnote{This construction is analogous to pseudogenerators introduced in
Ref.~\cite{White-Grassl-2006}.}
\[
R^{\prime}=\left(  1-\textstyle\frac{n-k}{2n}\right)  =(1+R)/2
\]
which gives complexity $2^{nF_{\mathrm{SW}}}$ of (\ref{1}) for a generic
stabilizer code $Q$. \ Due to possible degeneracy, we also verify that any
encoded vector $c$ of weight $d$ does not belong to code $C$. Most generic
codes $Q$ $[[n,Rn]]$ also belong to ensemble $\mathbb{Q}_{s}$ and therefore
satisfy the quantum GV bound. The latter gives exponent
(\ref{eq:SW-GV-complexity}).
\end{proof}

Note that classical exponent $F^{\ast}=R(1-R)$ achieves its maximum $1/4$ at
$R=1/2$. By contrast, quantum exponent $F_{\mathrm{SW}}^{\ast}$ achieves its
maximum $1/2$ at the rate $R=0$.

\subsection{Matching Bipartition (MB) technique\label{sec:bipartition}}

\begin{proposition}
\label{prop:MB} The code distance $\delta n$ of any quantum stabilizer code
$Q[[n,Rn]]$ can be found with complexity $2^{nF_{\mathrm{MB}}},$ where%
\begin{equation}
F_{\mathrm{MB}}=h_{4}(\delta). \label{4}%
\end{equation}
For random stabilizer codes that meet the quantum GV bound (\ref{eq:GV}),
\begin{equation}
F_{\mathrm{MB}}^{\ast}=(1-R)/2. \label{5}%
\end{equation}

\end{proposition}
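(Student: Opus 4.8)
The plan is to adapt the classical Matching Bipartition technique (also called the "birthday"/meet-in-the-middle decoding of Dumer) to the quaternary code $C^\bot$ associated with the stabilizer code $Q$, and to control the extra cost caused by degeneracy. First I would split the $n$ coordinates into two halves $I_1,I_2$ of size $n/2$ each (up to rounding). A minimum-weight vector $\mathbf{e}\in C^\bot\setminus C$ of weight $d=\delta n$ splits as $\mathbf{e}=\mathbf{e}_1+\mathbf{e}_2$ with $\mathbf{e}_j$ supported on $I_j$; by averaging, one of the two halves, say $I_1$, carries at most $\lceil d/2\rceil$ nonzero symbols. The idea is to enumerate, on each half separately, all candidate "half-words" of weight up to $d/2$ that are consistent with the zero syndrome, i.e. all vectors $\mathbf{a}$ supported on $I_j$ with $H\mathbf{a}$ lying in a suitable set, then match the two lists so that $H\mathbf{e}_1 = H\mathbf{e}_2$. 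The number of half-words of weight $\le d/2$ on $n/2$ positions over $\mathbb{F}_4$ has exponent $\tfrac12 h_4(\delta)\cdot 2 = h_4(\delta)$... more precisely $\binom{n/2}{d/2}3^{d/2}$, whose exponent in $n$ is $\tfrac12\,[\,h_4(\delta) \cdot \text{(rescaled)}\,]$; carrying the arithmetic through, the dominant list size is $q^{n h_4(\delta)/2}$ on each side, and the matching step (sorting by syndrome, or hashing into syndrome buckets) costs the same order up to polynomial factors, giving total complexity $2^{nF_{\mathrm{MB}}}$ with $F_{\mathrm{MB}}=h_4(\delta)$ as claimed.

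The key steps, in order, are: (i) fix the bipartition and argue the weight-balancing so that the target vector has a light half; (ii) on each half, generate the list $L_j$ of all weight-$\le d/2$ vectors together with their syndromes $H\mathbf{a}$ — here it is convenient to note, as in the SW remark, that fixing the support and values on one half is an erasure pattern, so each candidate is obtained by a single erasure-correction (linear-algebra) pass, and there are $\binom{n/2}{\le d/2}3^{\le d/2}$ of them; (iii) sort $L_1$ by syndrome and, for each element of $L_2$, look up matching syndromes, assembling $\mathbf{e}=\mathbf{e}_1+\mathbf{e}_2$ and recording its weight; (iv) among all assembled codewords of $C^\bot$ of weight $d$, discard those lying in $C$ (the degeneracy check, a membership test of polynomial cost), and output the minimum surviving weight by sweeping $d=1,2,\dots$. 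Finally, substituting the quantum GV bound $R=1-2h_4(\delta)$, i.e. $h_4(\delta)=(1-R)/2$, yields $F_{\mathrm{MB}}^{\ast}=(1-R)/2$, which is exactly half of the SW exponent $F_{\mathrm{SW}}=(1+R)h_4(\delta)$ specialized at the GV bound, consistent with the general MB-vs-SW improvement for classical codes.

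The main obstacle, and the place where the quantum case genuinely differs from the classical one, is the degeneracy bookkeeping in step (iv): a matched pair may reconstruct a nonzero weight-$d$ vector of $C^\bot$ that actually lies in $C$ and hence must not be counted, and conversely we must be sure the search does not miss the true minimum-weight element of $C^\bot\setminus C$ merely because some lighter element of $C$ shares a half. I would handle this exactly as in Corollary~\ref{cor:SW}: the enumeration is over all half-words consistent with the syndrome, so every codeword of $C^\bot$ of the relevant weight is produced by at least one matched pair regardless of whether it is in $C$; the membership test then filters correctly, and since $C$ has polynomial description, this adds only polynomial overhead and does not change the exponent. A secondary technical point is the rounding in the $d/2$ threshold and in the $n/2$ split (one may need weight $\lfloor d/2\rfloor$ on one side and $\lceil d/2\rceil$ on the other, and to try both assignments of which half is light), but these are absorbed into the polynomial prefactor and into the $\overline{\lim}$ defining $F$.
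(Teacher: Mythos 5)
Your overall architecture (two half-length lists of weight-$\approx d/2$ candidates over $\mathbb{F}_4$, syndrome matching by sorting, degeneracy filter $c\notin C$, and the substitution $h_4(\delta)=(1-R)/2$ from the quantum GV bound) is the same as the paper's, and your exponent bookkeeping $3^{d/2}\binom{n/2}{d/2}\sim 2^{nh_4(\delta)}$ is right. However, there is a genuine gap in step (i): you work with a \emph{single fixed} bipartition $I_1,I_2$ and restrict \emph{both} lists to vectors of weight at most $d/2$. The averaging remark only guarantees that \emph{one} half is light; the other half of a minimum-weight codeword can carry anywhere up to $d$ nonzero symbols (e.g., a codeword supported entirely inside $I_2$), and such a codeword is then reconstructed by no matched pair, so the algorithm can miss the true distance. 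Sweeping $d=1,2,\dots$ does not repair this, since at the true $d$ the codeword may simply be unbalanced with respect to your fixed split; and enlarging one list to weight up to $d$ destroys the claimed exponent.

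The paper closes exactly this hole with a sliding-window argument: it takes the ``left'' window $I_\ell(i,\lfloor n/2\rfloor)$ of cyclically consecutive positions and runs the matching for all $n$ starting points $i$. Since the weight of $c_{I_\ell(i,\cdot)}$ changes by at most one when $i$ is incremented, a discrete intermediate-value argument guarantees some $i$ for which the two halves carry exactly $\lfloor d/2\rfloor$ and $\lceil d/2\rceil$ nonzero symbols, so the balanced enumeration is exhaustive; the factor $n$ of window choices is polynomial and does not affect $F_{\mathrm{MB}}$. (A probabilistic alternative --- polynomially many random balanced bipartitions --- also works, but the proposition is stated for any code, and the deterministic cyclic-window fix is what the paper uses.) A minor secondary point: your aside that each candidate is obtained by an ``erasure-correction pass'' conflates the MB step with the SW re-encoding; in the matching step you only need the syndrome $H\mathbf{a}$ of each half-word, which is what you in fact use, so this does not affect correctness once the balancing issue is fixed.
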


\begin{proof}
Similarly to the proof of Corollary \ref{cor:SW}, we consider any stabilizer
code $Q$ $[[n,Rn]]$ and the corresponding code $C^{\bot}$. \ For code
$C^{\bot},$ we now apply the algorithm of \cite{Dumer-1986, Dumer-1989}, which
uses two similar sliding windows, the \textquotedblleft left" window $I_{\ell
}(i,$ $s_{\ell})$ of length $s_{\ell}=\left\lfloor n/2\right\rfloor $ and the
complementary \textquotedblleft right" window $I_{r}$ of length $s_{r}%
=\left\lceil n/2\right\rceil $. For any vector $e$ of weight $d,$ consider
vectors $e_{\ell}$ and $e_{r}$ in windows $I_{\ell}$ and $I_{r}.$ At least one
choice of position $i$ then yields the average weights $v_{\ell}=\left\lfloor
d/2\right\rfloor $ and $v_{r}=\left\lceil d/2\right\rceil $ for both vectors.
For each $i,$ both sets $\{e_{\ell}\}$ and $\{e_{r}\}$ of such
\textquotedblleft average-weight\textquotedblright\ vectors have the size of
order $L=(q-1)^{d/2}{\binom{n/2}{d/2}}$.

We now calculate the syndromes of all vectors in sets $\{e_{\ell}\}$ and
$\{e_{r}\}$ to find matching vectors $(e_{\ell},e_{r})$, which give identical
syndromes, and form a codeword. Sorting the elements of the combined set
$\{e_{\ell}\}\cup\{e_{r}\}$ by syndromes yields all matching pairs with
complexity of order $L\log_{2}L$. Thus, we find a code vector of weight
$d=\delta n$ in any linear $q$-ary code with complexity of order $q^{Fn}$,
where
\begin{equation}
F=h_{q}(\delta)/2.\label{mb-1}%
\end{equation}
For $q$-ary codes on the GV bound, $F^{\ast}=(1-R)/2$.  For stabilizer codes,
the arguments used to prove Corollary \ref{cor:SW} then give exponents
(\ref{4}) and (\ref{5}).
\end{proof}

Note that the MB-technique works for any linear code, unlike other known
techniques provably valid for random codes. For very high rates $R\rightarrow
1,$ this technique yields the lowest complexity exponent known for classical
and quantum codes.

\subsection{Punctured bipartition (PB) technique}

\begin{proposition}
\label{prop:PB} The code distance $\delta n$ of a random quantum stabilizer
code $Q[[n,Rn]]$ can be found with complexity $2^{nF_{\mathrm{PB}}},$ where%
\begin{equation}
F_{\mathrm{PB}}=\textstyle{\frac{2(1+R)}{3+R}}h_{4}(\delta)
\label{eq:punctured-complexity}%
\end{equation}
For random stabilizer codes that meet the quantum GV bound (\ref{eq:GV}),
\begin{equation}
F_{\mathrm{PB}}^{\ast}=(1-R^{2})/(3+R) \label{eq:punctured-GV}%
\end{equation}

\end{proposition}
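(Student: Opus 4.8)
The plan is to follow the same scheme used for the MB technique (Proposition \ref{prop:MB}), but to reduce the list sizes further by first \emph{puncturing} a block of coordinates before performing the bipartition-and-match step on the remaining coordinates. Concretely, I would work with the quaternary code $C^{\bot}$ associated to the stabilizer code $Q[[n,Rn]]$, exactly as in the proofs of Corollary \ref{cor:SW} and Proposition \ref{prop:MB}, so that it suffices to establish a general $q$-ary statement: the minimum weight $d=\delta n$ of a random linear $q$-ary code of rate $R'$ can be found with complexity $q^{Fn}$ for $F = \frac{2}{3+R'}\,(1-R')\,$ in the appropriate parametrization, and then substitute $R' = (1+R)/2$ and $q=4$ to get (\ref{eq:punctured-complexity}) and (\ref{eq:punctured-GV}). (The factor $\frac{2(1+R)}{3+R}$ is precisely what $\frac{2}{3+R'}$ becomes after this substitution up to the $h_4(\delta)$ normalization, so the arithmetic at the end is routine.)

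The core combinatorial step: choose a puncturing window $P$ of length $p = \rho n$ and split the complementary $n-p$ coordinates into a left window $I_\ell$ and right window $I_r$ of (nearly) equal length $(n-p)/2$ each, all taken as cyclically consecutive blocks so that a sliding choice of the starting position is available. For a minimum-weight codeword $e$ of weight $d$, a suitable cyclic shift guarantees that the weight of $e$ restricted to each of the three windows is close to its average: roughly $\delta p$ on $P$ and $\delta(n-p)/2$ on each of $I_\ell$, $I_r$. On the punctured code $C_{\bar P}$ (restricted to $I_\ell\cup I_r$) we run the matching step of \cite{Dumer-1986,Dumer-1989}: enumerate all weight-$\approx\delta(n-p)/2$ vectors on $I_\ell$ and on $I_r$, compute their syndromes with respect to the parity checks that do not involve $P$, sort, and collect matching pairs. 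Each surviving pair is completed on the remaining coordinates (the $p$ punctured positions plus the redundancy used up) by erasure-correction / re-encoding, and we keep it iff the completed word has total weight exactly $d$ and (for the quantum case, as in Corollary \ref{cor:SW}) does not lie in $C$. Increasing $d=1,2,\dots$ until a codeword is found yields the minimum distance.

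The complexity is the sum of two contributions: the list size $L_{\rm match}\approx (q-1)^{\delta(n-p)/2}\binom{(n-p)/2}{\delta(n-p)/2}$ from the bipartition on the unpunctured part, and the cost $L_{\rm punc}$ of handling the punctured block, which behaves like $q^{(\text{redundancy available on }P)\cdot n}$ — i.e.\ the number of coset representatives one must run through on the punctured coordinates is governed by how many parity checks are lost by puncturing, which for a random code of rate $R'$ is about $(1-R')p$ up to $o(n)$ terms. Balancing these two exponents over the free parameter $\rho$ (equivalently $p$) is the optimization that produces the factor $\frac{2}{3+R'}$: setting $\tfrac12(1-\rho)h_q(\delta) = (1-R')\rho$ (the two exponents, in the $h_q(\delta)$-normalized form) and solving for $\rho$, then substituting back, gives $F = \frac{2(1-R')}{(1-R') + 2}\,h_q(\delta)$ in the right schematic form; with $R'=(1+R)/2$ this is $\frac{2(1+R)}{3+R}h_4(\delta)$, and on the GV bound $h_4(\delta)=(1-R)/2$ gives (\ref{eq:punctured-GV}).

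The main obstacle, as usual for these sliding/puncturing arguments, is the bookkeeping needed to make "most random codes" statements rigorous: one must verify that for all $n$ cyclic shifts simultaneously the punctured code $C_{\bar P}$ has the expected dimension (loses no more than $(1-R')p + O(\log n)$ checks) and that the three-window weight-splitting can indeed be achieved for a minimum-weight codeword by some shift — this is the analogue of the $s = k + 2\lfloor\log_q n\rfloor$ slack used in Proposition \ref{prop:SW} and the restriction to $\mathbb{C}_s$. One also has to confirm that the erasure-correction step on the punctured positions is genuinely polynomial per surviving pair, so that $L_{\rm punc}$ enters only as a multiplicative polynomial-times-$q^{(1-R')\rho n}$ term rather than exponentiating again. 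Granting these standard facts (and the reduction to $C^{\bot}$ together with the degeneracy check from Corollary \ref{cor:SW}), the exponent optimization above closes the argument.
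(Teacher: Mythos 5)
Your overall architecture is the right one and essentially matches the paper's: work with the quaternary code $C^{\bot}$ of effective rate $R'=(1+R)/2$, slide a window (your complement of $P$, the paper's $I(i,s)$ with $s=n-p$) on which a minimum-weight codeword has its proportional weight $v=\lfloor ds/n\rfloor$, run the matching-bipartition step on that window, and re-encode each survivor to full length, with the degeneracy check $c\notin C$ at the end. However, there is a genuine error in how you price the second stage, and your final algebra does not actually produce (\ref{eq:punctured-complexity}). The exponential cost after matching is not ``$q^{(1-R')\rho n}$ coset representatives on the punctured coordinates'' --- puncturing $p$ positions of a random $[n,k]$ code loses (generically) $p$ checks, not $(1-R')p$, and in any case no enumeration over $P$ is needed, since the window of length $s>k$ contains an information set and the completion is a unique, polynomial re-encoding. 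The correct exponential quantity is the expected number of syndrome-matched pairs, i.e.\ the expected number of weight-$v$ codewords of the punctured $[s,k]$ code, $L_v=(q-1)^v\binom{s}{v}/q^{s-k}$, each of which must be re-encoded. Balancing $L_s=(q-1)^{v/2}\binom{s/2}{v/2}$ against $L_v$ gives $(s-k)=\tfrac{s}{2}h_q(\delta)$, i.e.\ $s=2nR'/(1+R')$ on the GV bound, and $F=\tfrac{R'}{1+R'}h_q(\delta)$ in base $q$; with $q=4$, $R'=(1+R)/2$ and conversion to base $2$ this is exactly $\tfrac{2(1+R)}{3+R}h_4(\delta)$, hence $(1-R^2)/(3+R)$ on the quantum GV bound.

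By contrast, your balancing equation $\tfrac12(1-\rho)h_q(\delta)=(1-R')\rho$ yields $F=\frac{(1-R')h_q(\delta)}{2(1-R')+h_q(\delta)}$, which on the GV bound is $(1-R')/3$ rather than $R'(1-R')/(1+R')$; and the schematic form you quote, $F=\frac{2(1-R')}{(1-R')+2}h_q(\delta)$, does not follow from that equation and does not reduce to $\frac{2(1+R)}{3+R}h_4(\delta)$ under $R'=(1+R)/2$ (it gives $\frac{2(1-R)}{5-R}h_4(\delta)$). So the ``routine arithmetic at the end'' in fact fails: the missing ingredient is the correct list-size estimate $L_v$ for the punctured code (equivalently, using the redundancy $s-k$ of $C_{I(i,s)}$ as the matching resource), which is precisely what drives the optimization of $s$ in the paper's proof.
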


\begin{proof}
We combine the SW and MB techniques, similarly to the soft-decision decoding
of \cite{Dumer-2001}. Let $s=\left\lceil 2nR/(1+R)\right\rceil >k.$ Then for
most random $[n,k]$ codes $C,$ all $n$ punctured codes $C_{I(i,s)}$ are linear
random $[s,k]$-codes. Also, any codeword of weight $d$ has average weight
$v=\left\lfloor ds/n\right\rfloor $ in some window $I(i,s)$. For simplicity,
let $s$ and $v$ be even. We then apply the MB technique and consider all
vectors $e_{\ell}$ and $e_{r}$ of weight $v/2$ on each window $I(i,s)$. The
corresponding sets have the size
\[
L_{s}=(q-1)^{v/2}\left(  _{v/2}^{s/2}\right)  .
\]
We then select all matching pairs $(e_{\ell},e_{r})$ with the same syndrome.
The result is the list $\{e\}$ of code vectors of weight $v$ in the punctured
$[s,k]$ code $C_{I(i,s)}$. For a random $[s,k]$ code, this list $\{e\}$ has
the expected size of order
\[
L_{v}=(q-1)^{v}\left(  _{v}^{s}\right)  /q^{s-k}%
\]
Each vector of the list $\{e\}$ is re-encoded to the full length $n$. For each
$d=1,2,\ldots$, we stop the procedure once we find a re-encoded vector of
weight $d$. The overall complexity has the order of $L_{v}+L_{s}$. It is easy
to verify \cite{Dumer-2001} that for codes that meet the GV bound, our choice
of parameter $s$ gives the same order $L_{v}\sim$ $L_{s}$ and minimizes the
sum $L_{v}+L_{s}$ to the order of $q^{F_{\ast}n}$, where
\begin{equation}
F^{\ast}=h_{q}(\delta)R/(1+R)=R(1-R)/(1+R).\label{eq-punctured}%
\end{equation}
To proceed with quantum codes $Q[[n,Rn]],$ observe that our parameter $s$
again depends on the effective code rate $R^{\prime}=(1+R)/2$. For stabilizer
codes, this change yields exponent (\ref{eq:punctured-complexity}), which
gives (\ref{eq:punctured-GV}) if codes meet the quantum GV bound.\smallskip
\end{proof}

For codes of rate $R\rightarrow1$ that meet the GV bound, the PB technique
gives the lowest known exponents $F_{\mathrm{PB}}^{\ast}$ (for stabilizer
codes) and $F^{\ast}$ (for classical $q$-ary codes). However, no complexity
estimates have been proven for specific code families.

Finally, consider the narrower Calderbank-Shor-Steane (CSS) class of quantum
codes. Here a parity check matrix is a direct sum $H=G_{x}\oplus\omega G_{z}$,
and the commutativity condition simplifies to $G_{x}G_{z}^{T}=0$. A CSS code
with $\mathrm{rank}\mathop G_{x}=\mathrm{rank}\mathop G_{z}=(n-k)/2$ has the
same effective rate $R^{\prime}=(1+R)/2$ since both codes include $k^{\prime
}=n-(n-k)/2=(n+k)/2$ information bits. Since CSS codes are based on binary
codes, their complexity exponents $F(R,\delta)$ can be obtained from
(\ref{sw0}), (\ref{mb-1}), and (\ref{eq-punctured}) with parameters $q=2$ and
$R^{\prime}=(1+R)/2$. Here we can also use the GV bound, which reads for CSS
codes as follows \cite{Calderbank-Shor-1996}
\begin{equation}
R=1-2h_{2}(\delta). \label{eq:GV-CSS}%
\end{equation}

\subsection{Covering set (CS) technique\label{sec:CS}}

This probabilistic technique was proposed in \cite{Prange-1962} and has become
a benchmark in code-based cryptography since classical paper
\cite{McEliece-1978}. This technique  lowers all three complexity estimates
(\ref{1}), (\ref{4}), and (\ref{eq:punctured-complexity}) except for code
rates $R\rightarrow1$. The CS technique has also been studied for distance
verification of specific code families (see \cite{Lee-1988} and
\cite{Leon-1988}); however, provable results
\cite{Kruk-1989,Coffey-Goodman-1990} are only known for generic random codes.

Let $C[n,k]$ be some $q$-ary random linear code with an $r\times n$ parity
check matrix $H$, $r=n-k$. Consider some subset $J$ of $\rho\leq r$ positions
and the complementary subset $I$ of $g\geq k$ positions. Then the
\emph{shortened} code $C_{J}=\{c_{J}:c_{I}=0\}$ has the parity-check matrix
$H_{J}$ of size $r\times\rho.$ We say that matrix $H_{J}$ has \emph{co-rank}
$b\left(  H_{J}\right)  =\rho-\mathop\mathrm{rank}H_{J}.$ Note that $b\left(
H_{J}\right)  =\dim C_{J},$ which is the dimension of code $C_{J}.$

\begin{proposition}
\label{prop:CS} The code distance $\delta n$ of a random quantum stabilizer
code $Q[[n,Rn]]$ can be found with complexity $2^{nF_{\mathrm{CS}}},$ where%
\begin{equation}
F_{\mathrm{CS}}=h_{2}(\delta)-\textstyle\left(  \frac{1-R}{2}\right)
h_{2}\left(  \frac{2\delta}{1-R}\right)  \label{3}%
\end{equation}

\end{proposition}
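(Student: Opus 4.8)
The plan is to adapt the Prange covering-set (CS) technique to the quaternary code $C^{\bot}$ associated with a random stabilizer code $Q[[n,Rn]]$, exactly in the spirit of the proofs of Corollary \ref{cor:SW} and Proposition \ref{prop:MB}. Recall that $d(Q)=d(C^{\bot}\setminus C)$, that $C$ is an additive code of size $2^{n-k}$ over $\mathbb{F}_4$, and that $C^{\bot}$ has effective code rate $R'=(1+R)/2$, i.e. effective redundancy $r'=(1-R)n/2$ (measured in bits, since the code is additive rather than $\mathbb{F}_4$-linear). First I would set up the CS iteration: pick a random subset $J$ of $\rho$ coordinates, with complementary subset $I$ of $g=n-\rho$ coordinates, and shorten $C^{\bot}$ on $I$ (equivalently, look only at codewords supported inside $J$). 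The shortened code $C^{\bot}_J$ has dimension $b(H_J)=\rho-\mathop{\rm rank}H_J$; for a random code and $\rho$ somewhat larger than $r'$, this co-rank is $\rho-r'+o(n)$ with overwhelming probability, so $C^{\bot}_J$ can be enumerated in time $2^{(\rho-r')n+o(n)}$ (polynomial per codeword via Gaussian elimination on $H_J$). Running over all such codewords and keeping the minimum weight (while checking, as in Corollary \ref{cor:SW}, that the found vector does not lie in the degeneracy code $C$) succeeds for a given minimum-weight codeword $e$ of weight $d=\delta n$ precisely when $\mathop{\rm supp}(e)\subseteq J$.

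The second step is the probabilistic bookkeeping. For a fixed weight-$d$ codeword $e$, the probability that a uniformly random $\rho$-subset $J$ contains $\mathop{\rm supp}(e)$ is $\binom{n-d}{\rho-d}/\binom{n}{\rho}$, which to exponential accuracy equals $\exp_2\!\big[n\big((1-\delta)h_2(\tfrac{\rho/n-\delta}{1-\delta})-h_2(\rho/n)\big)\big]$ (times a polynomial factor). The expected number of independent trials needed is the reciprocal of this, so the total expected complexity is of order
\[
2^{(\rho/n-r'/n)n}\cdot \frac{\binom{n}{\rho}}{\binom{n-d}{\rho-d}}
\]
up to polynomial factors. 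Writing $\sigma\equiv\rho/n$ and $r'/n=(1-R)/2$, the exponent to be minimized over $\sigma\in[(1-R)/2,\,1]$ is
\[
F(\sigma)=\sigma-\tfrac{1-R}{2}-(1-\delta)\,h_2\!\Big(\tfrac{\sigma-\delta}{1-\delta}\Big)+h_2(\sigma).
\]
The third step is the optimization: differentiate $F(\sigma)$ in $\sigma$, set the derivative to zero, and solve for the optimal $\sigma_\ast$. The stationarity condition reduces (after using $\tfrac{d}{dx}h_2(x)=\log_2\tfrac{1-x}{x}$) to a linear relation that gives $\sigma_\ast-\delta=(1-\delta)\cdot\text{(something)}$; substituting back and simplifying the entropy terms should collapse the whole expression to the claimed
\[
F_{\mathrm{CS}}=h_2(\delta)-\Big(\tfrac{1-R}{2}\Big)h_2\!\Big(\tfrac{2\delta}{1-R}\Big).
\]
I expect the algebraic collapse in this last step to be the only real obstacle — it is the same identity that underlies the classical Prange bound $h_q(\delta)-(1-R)h_q(\delta/(1-R))$ with $q=2$ and $1-R$ replaced by the effective redundancy $r'/n=(1-R)/2$, so in fact one can shortcut it: prove the classical statement that a random binary $[n,k']$ code admits distance verification with CS-complexity $2^{n[h_2(\delta)-(1-R_{\rm bin})h_2(\delta/(1-R_{\rm bin}))]}$ where $R_{\rm bin}=k'/n$, then specialize with $1-R_{\rm bin}=(1-R)/2$ exactly as in the proofs above. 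A remaining detail is to confirm that the effective-redundancy substitution is legitimate for an additive (rather than $\mathbb{F}_4$-linear) code: this follows because shortening and rank counting for $H_J$ can be carried out over $\mathbb{F}_2$ after expanding each $\mathbb{F}_4$ coordinate into two bits, and the enumeration cost $2^{(\rho-r')n}$ and the covering probability are unaffected by this reinterpretation, while degeneracy is handled by the same membership test in $C$ used in Corollary \ref{cor:SW}.
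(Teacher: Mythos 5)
Your plan follows the paper's route: apply the Prange/CS procedure to the additive code $C^{\bot}$ with effective redundancy $r'=(1-R)n/2$, cover the unknown support of a weight-$d$ codeword with random coordinate subsets, enumerate the shortened code on each subset via Gaussian elimination on $H_J$, and screen out degeneracy by testing membership in $C$. Two steps in your write-up, however, would not survive as stated, although neither is fatal because your own ``shortcut'' (fix $|J|$ at the effective redundancy and invoke the classical Prange bound) is exactly what the paper does. First, the hoped-for algebraic ``collapse'' at a stationary point of your $F(\sigma)$ does not occur: with your cost function the stationarity condition gives $\sigma=2\delta$, and whenever this lies in the allowed interval it produces a value different from (in fact below) the right-hand side of (\ref{3}); the exponent (\ref{3}) is instead the value of $F$ at the boundary $\sigma=(1-R)/2$, where the trial count satisfies $\binom{n}{\rho}/\binom{n-d}{\rho-d}=\binom{n}{d}/\binom{\rho}{d}$ and the per-trial cost is subexponential because the co-rank of $H_J$ is $O(\sqrt{n})$ for all square submatrices of most random matrices (the paper's bound (\ref{11})). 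The paper therefore fixes $|J|$ equal to the redundancy from the outset and never optimizes over a window size.

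Second, your claim that after expanding each $\mathbb{F}_4$ coordinate into two bits the per-trial enumeration cost $2^{\rho-r'}$ is ``unaffected'' is incorrect: each quaternary position contributes two binary columns against the $n-k$ binary trace parity checks, so for $\rho>r'$ the typical co-rank is $2(\rho-r')$ bits and the per-trial cost is $4^{\rho-r'}$, which invalidates the linear term in your $F(\sigma)$ away from the endpoint. The discrepancy vanishes precisely at $\rho=r'$, which is another reason to adopt the fixed choice. With $|J|=r'$, a covering family of size $T(n,r',d)\,n\ln n$ (the paper's (\ref{eq:112})) together with the co-rank bound gives total complexity $T(n,r',d)$ up to subexponential factors, whose exponent is exactly $h_2(\delta)-\frac{1-R}{2}h_2\bigl(\frac{2\delta}{1-R}\bigr)$, as claimed.
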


\begin{proof}
First, consider a $q$-ary code $C[n,k]$. We randomly choose the sets $J$ of
$r$ positions to cover every possible set of $d<r$ non-zero positions. To do
so, we need no less than
\[
T(n,r,d)={\binom{n}{d}}/{\binom{r}{d}}%
\]
sets $J$. On the other hand, the proof of Theorem 13.4 of \cite{Erdos-book}
shows that a collection of
\begin{equation}
T=T(n,r,d)n\ln n \label{eq:112}%
\end{equation}
random sets $J$ fails to yield such an $(n,r,d)$-covering with a probability
less than $e^{-n\ln n}$. It is also well known that most $r\times n$ matrices
$H$ (excluding a fraction ${\binom{n}{r}}^{-1}\,$ of them) yield small
co-ranks
\begin{equation}
0\leq b_{J}\leq b_{\max}=\textstyle\sqrt{2\log_{q}{\binom{n}{r}}} \label{11}%
\end{equation}
for all square submatrices $H_{J}$, $|J|=r.$

Given an $(n,r,d)$-covering $W$, the CS procedure inspects each set $J\in W$
and discards code $C$ if $\dim C_{J}>b_{\max}.$ Otherwise, it finds the
lightest codewords on each set $J$. To do so, we first perform Gaussian
elimination on $H_{J}$ and obtain a new $r\times r$ matrix $\mathcal{H}_{J}$
that has the same co-rank $b\left(  H_{J}\right)  .$ Let $\mathcal{H}_{J}$
include $r-b_{J}$ unit columns $u_{i}=(0\ldots01_{i}0\ldots0)$ and $b_{J}$
other (linearly dependent) columns $g_{j}$. All $r$ columns have zeroes in the
last $b_{J}$ positions. If $b_{J}=0$ in trial $J,$ then $C_{J}=0$ and we
proceed further. If $b_{J}>0,$ the CS algorithm inspects $q^{b_{J}}-1$ linear
combinations (LC) of columns $g_{j}$. Let $LC(p)$ denote some LC that includes
$p$ columns $g_{j}$. If this $LC(p)$ has weight $w,$ we can nullify it by
adding $w$ unit columns $u_{i}$ and obtain a codeword $c$ of weight $w+p$. The
algorithm ends once we find a codeword of weight $w+p=d$, beginning with $d=2$.

For codes that satisfy condition (\ref{11}), the CS algorithm has the
complexity order of $n^{3}q^{b_{\max}}T(n,r,d)$ that is defined by $T(n,r,d)$.
For any $q,$ this gives complexity $2^{nF}$ with exponent%
\begin{equation}
F=(1-R)\bigl[1-h_{2}\left(  \delta/(1-R)\right)  \bigr]\label{gener}%
\end{equation}
For a stabilizer code $[[n,Rn]]$, we obtain (\ref{3}) using the quaternary
code $C^{\bot}$ with the effective code rate $R^{\prime}=(1+R)/2$.
\end{proof}


For stabilizer codes that meet the quantum GV bound (\ref{eq:GV}), exponent
$F_{\mathrm{CS}}$ of (\ref{3}) reaches its maximum $F_{\mathrm{max}}%
\approx0.22$ at $R=0$. Their binary counterparts yield exponent (\ref{gener})
that achieves its maximum $0.119$ at $R\approx1/2$.\smallskip

\emph{Discussion. } Fig.~\ref{fig:cmp} exhibits different complexity exponents
computed for stabilizer codes that meet the quantum GV bound. The CS technique
gives the best performance for most code rates $R<1$, while the two
bipartition techniques perform better for high code rates $R,$ which are close
to 1. Indeed, equations~(\ref{5}) and (\ref{eq:punctured-GV}) scale linearly
with $1-R$, unlike the CS technique that yields a logarithmic slope, according
to (\ref{3}). \begin{figure}[ptbh]
\centering
\includegraphics[width=0.6\columnwidth]{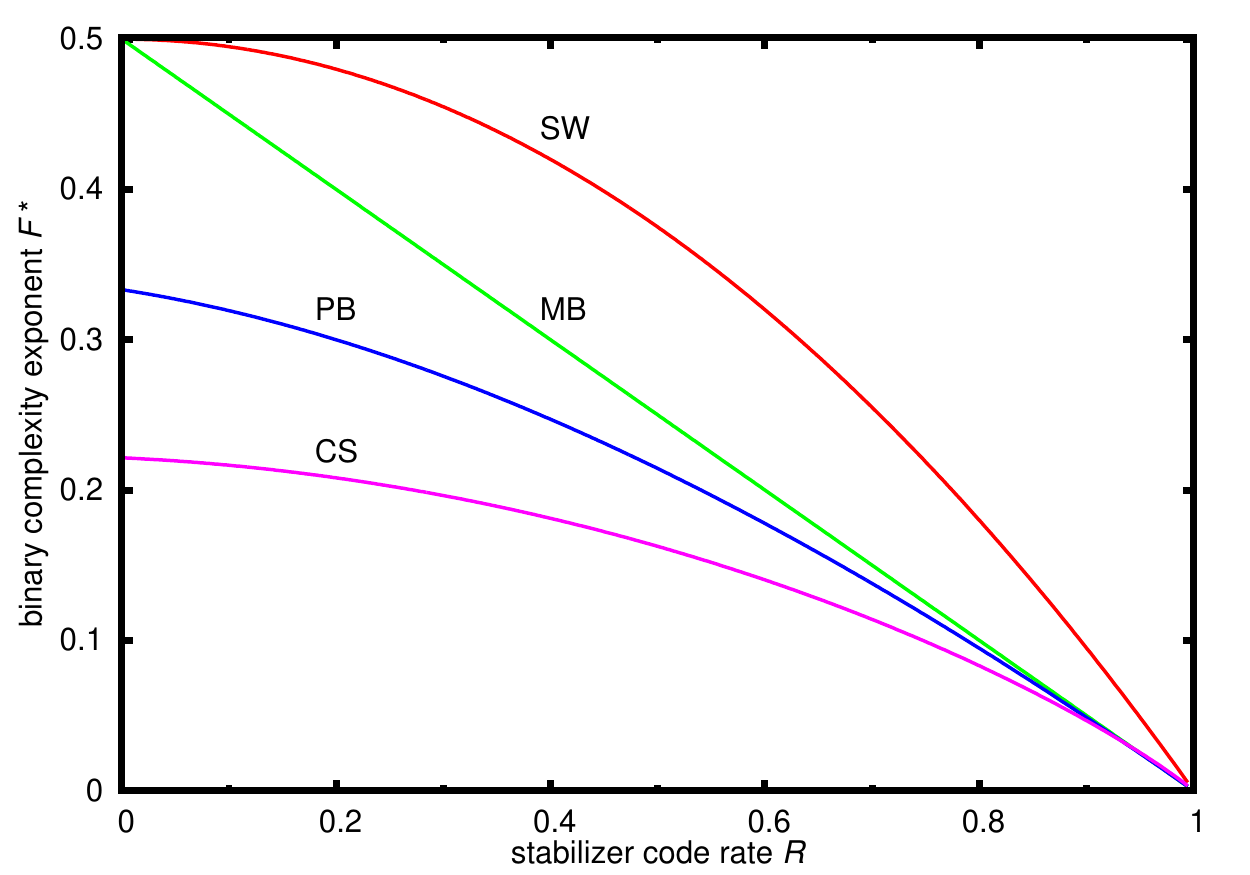}\caption{Complexity exponents of
the four generic decoding techniques applied to quantum codes that meet the
quantum GV bound (\ref{eq:GV}). SW: sliding window, (\ref{eq:SW-GV-complexity}%
), MB: matching bipartition, (\ref{5}), PB: punctured bipartition,
(\ref{eq:punctured-GV}), and CS: covering set, (\ref{3}).}%
\label{fig:cmp}%
\end{figure}

More generally, the above algorithms correct the received vector $y$ into the
list of codewords located at distance $d$ from $y$. In this regard, they are
similar to the maximum likelihood decoding of vector $y$ within a given
distance. For example, given an error syndrome $h\neq0,$ MB\ technique still
forms the sets of vectors $\{e_{\ell}\}$ and $\{e_{r}\}.$ It also derives the
syndromes $h(e_{\ell})$, but uses the syndromes $h(e_{r})+h$ on the right
half. Similarly, some SW trials will correctly identify errors on the
information blocks and then perform error-free re-encoding. For the CS
algorithm, we also make a slight adjustment and inspect all combinations
$LC(p)+h$. Each combination $LC(p)+h$ of weight $w$ gives an error of weight
$p+w$. It is also important that every trial of the CS algorithm needs only
the syndrome $h$ instead of \ the received vector $y.$ Thus, this algorithm
can perform syndrome-based decoding of quantum stabilizer codes.

\section{Distance verification for LDPC codes\label{sec:LDPC}}

Below, we consider two ensembles of binary $(\ell,m)$-LDPC codes with
$m\geq\ell\geq3$. Codes in these ensembles are defined by the binary
equiprobable $r\times n$ parity-check matrices $H$. In ensemble $\mathbb{A}%
(\ell,m),$ matrices $H$ have all columns of weight $\ell$ and all rows of
weight $m=\ell n/r$. This ensemble also includes a smaller LDPC ensemble
$\mathbb{B}(\ell,m)$ originally proposed by Gallager \cite{Gallager}. For each
code in $\mathbb{B}(\ell,m)$, its parity-check matrix $H$ is divided into
$\ell$ horizontal blocks $H_{1},\ldots,H_{\ell}$ of size $\frac{r}{\ell}\times
n$. Here the first block $H_{1}$ consists of $m$ unit matrices of size
$\frac{r}{\ell}\times\frac{r}{\ell}$. Any other block $H_{i}$ is obtained by
some random permutation $\pi_{i}(n)$ of $n$ columns of $H_{1}$. Below, we use
an equivalent description, where block $H_{1}$ also undergoes a random
permutation $\pi_{1}(n)$. Ensembles $\mathbb{A}(\ell,m)$ and $\mathbb{B}%
(\ell,m)$ have similar spectra and achieve the best asymptotic distance for a
given code rate $1-\ell/m$ among the LDPC ensembles studied to date
\cite{Litsyn-2002}.

For brevity, we say below that a linear code $C$ with $N$ non-zero codewords
has \textit{null-free }\emph{size} $N$. We also say that code ensemble
$\mathbb{C}$ is \emph{permutation-invariant} (PI) if any permutation of
positions $\pi$ in any code $C\in\mathbb{C}$ again gives a code $\pi
(C)\in\mathbb{C}.$ In particular, LDPC ensembles are in this class. For any
subset of positions $J$ of size $\mathcal{\rho}=\theta n,$ consider all
shortened codes $C_{J}\in\mathbb{C}_{J}\mathbf{.}$ Then for any PI ensemble
$\mathbb{C},$ all shortened ensembles $\mathbb{C}_{J}$ have the same expected
null-free\textit{ }size $N_{\theta}$ given any $J$ of size $\theta n.$ By
Markov's inequality, for any parameter $t>0$, at most a fraction $\frac{1}{t}$
of the shortened codes $C_{J}$ have null-free\textit{ }size exceeding
$tN_{\theta}$ on any subset $J$.

Note that for LDPC codes, parity checks form non-generic sparse matrices
$H_{J}$. Therefore, below we change the approach of Section \ref{sec:generic}.
In essence, we will relate the size $2^{b_{J}}$ of codes $C_{J}$ to the
erasure-correcting capacity of LDPC codes. In doing so, we extensively use
average weight spectra derived for ensemble $\mathbb{B}(\ell,m)$ in
\cite{Gallager} and for ensemble $\mathbb{A}(\ell,m)$ in \cite{Litsyn-2002}.
This analysis can readily be extended to other ensembles with  known average
weight spectra. The following results are well known and will be extensively
used in our complexity estimates.

Let $\alpha=\ell/m=1-R$. For any parameter $\beta\in\lbrack0,1],$ the
equation
\begin{equation}
\frac{(1+t)^{m-1}+(1-t)^{m-1}}{(1+t)^{m}+(1-t)^{m}}=1-\beta\label{t1}%
\end{equation}
has a single positive root $t$ as a function of $\beta$. Below we use the
parameter
\begin{equation}
q({\alpha,}\beta)=\alpha\log_{2}\frac{(1+t)^{m}+(1-t)^{m}}{2t^{\beta m}%
}-\alpha mh_{2}(\beta), \label{p1}%
\end{equation}
where we also take $q({\alpha,}\beta)=-\infty$ if $m$ is odd and $\beta
\geq1-m^{-1}$. Then Theorem 4 of \cite{Litsyn-2002} shows that a given
codeword of weight $\beta n$ belongs to some code in $\mathbb{A}(\ell,m)$ with
probability $P({\alpha,}\beta)$ such that
\begin{equation}
\underset{n\rightarrow\infty}{\lim}\textstyle\frac{1}{n}\log_{2}P({\alpha
,}\beta)=q({\alpha,}\beta) \label{p2}%
\end{equation}

\begin{lemma}
\label{lm:ldpc}For any given subset $J$ of size $\theta n$, where $\theta
\leq1,$ codes $C_{J}(\ell,m)$ of the shortened LDPC ensembles $\mathbb{A}%
(\ell,m)$ or $\mathbb{B}(\ell,m)$ have the average null-free\textit{ }size
$N_{\theta}$ such that
\begin{equation}
\underset{n\rightarrow\infty}{\lim}\textstyle\frac{1}{n}\log_{2}N_{\theta
}=f(\theta) \label{punct}%
\end{equation}
where%
\begin{equation}
f(\theta)=\max_{0<\beta<1}\left\{  q({\alpha},{\beta\theta})+{\theta
h_{2}(\beta)}\right\}  \label{f}%
\end{equation}

\end{lemma}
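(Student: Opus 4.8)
The plan is to compute $N_{\theta}$ directly as the expected number of low-support codewords and then show that this sum, on the exponential scale, is governed by its single largest term. First I would fix a subset $J$ with $|J|=\theta n$; by the permutation invariance of $\mathbb{A}(\ell,m)$ and $\mathbb{B}(\ell,m)$ (for $\mathbb{B}$ using the equivalent description in which the first block is also randomly permuted) the distribution of $|C_{J}|$ does not depend on the choice of $J$, so one may take $J=\{1,\dots,\lfloor\theta n\rfloor\}$. Since a codeword of $C$ whose support lies in $J$ is uniquely determined by its restriction to $J$, we have $|C_{J}|-1=\#\{c\in C\setminus\{0\}:\operatorname{supp}c\subseteq J\}$, and linearity of expectation gives
\begin{equation}
N_{\theta}=\sum_{w=1}^{\lfloor\theta n\rfloor}\binom{\lfloor\theta n\rfloor}{w}P(\alpha,w/n),
\end{equation}
where $P(\alpha,w/n)$ is the probability that a fixed weight-$w$ vector lies in a random code of the ensemble, which (again by permutation invariance) depends on the vector only through its weight and obeys $\frac1n\log_{2}P(\alpha,\beta)\to q(\alpha,\beta)$ by (\ref{p2}); for $\mathbb{B}(\ell,m)$ the same exponent follows from Gallager's weight-spectrum computation \cite{Gallager}.

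Next I would substitute $w=\beta\theta n$ with $\beta\in(0,1]$ and combine Stirling's formula, $\binom{\lfloor\theta n\rfloor}{w}=2^{\theta n\,h_{2}(\beta)+o(n)}$, with (\ref{p2}), $P(\alpha,\beta\theta)=2^{n\,q(\alpha,\beta\theta)+o(n)}$, so that the general term of the sum has exponent $\theta h_{2}(\beta)+q(\alpha,\beta\theta)$ up to $o(n)$. For the lower bound on $N_{\theta}$, keep only the single term with $\beta$ (rational, then rounded) arbitrarily close to a maximizer of (\ref{f}); this yields $\liminf_{n}\frac1n\log_{2}N_{\theta}\ge f(\theta)$. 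For the upper bound, bound the whole sum by $(\theta n+1)$ times its largest term; since there are only polynomially many terms, the binomial factor obeys the uniform bound $\binom{\lfloor\theta n\rfloor}{w}\le 2^{\theta n\,h_{2}(w/(\theta n))}$ for every $w$, and the weight-spectrum estimate gives $\frac1n\log_{2}P(\alpha,w/n)\le q(\alpha,w/n)+o(1)$ uniformly in $w$, we get $\limsup_{n}\frac1n\log_{2}N_{\theta}\le\max_{0<\beta\le1}\{\theta h_{2}(\beta)+q(\alpha,\beta\theta)\}=f(\theta)$. Terms with $q=-\infty$ (i.e.\ $m$ odd and $\beta\theta\ge1-m^{-1}$) carry $P=0$ and are simply omitted, and the endpoint $\beta=1$, being a single point, does not change the supremum over the open interval, so (\ref{punct}) follows from the two bounds.

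The only delicate point is the uniformity required in the upper bound: passing from the pointwise limit (\ref{p2}) to control of $\max_{w}\binom{\lfloor\theta n\rfloor}{w}P(\alpha,w/n)$ needs the weight-spectrum estimate in its uniform form, which \cite{Litsyn-2002} and \cite{Gallager} in fact provide (their $n$-dependent prefactors being sub-exponential); alternatively one may discretize $\beta\in(0,1]$ into $O(n)$ subintervals and use continuity of $q(\alpha,\cdot)$ on the interior of its domain. Everything else is the routine fact that a sum of polynomially many exponentials has growth rate equal to that of its largest term.
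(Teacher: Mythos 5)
Your proof follows essentially the same route as the paper: you count, for each weight $w=\beta\theta n$, the expected number of codewords supported in $J$ as $\binom{\theta n}{w}P(\alpha,w/n)$ and let the dominant term give the exponent $\theta h_{2}(\beta)+q(\alpha,\beta\theta)$, which is exactly the paper's argument, only with the Laplace-type upper/lower bounds and the uniformity issue spelled out more carefully. No gaps; the added care about uniformity (or the discretization alternative) is a legitimate tightening of the paper's terse "$\sim$" step rather than a different method.
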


\begin{proof}
For any set $J$ of size $\theta n,$ consider codewords of weight $\beta{\theta
n}$ that have support contained on $J$. For any $\beta\in(0,1]$, codes in
$\mathbb{A}_{J}(\ell,m)$ contain the average number
\begin{equation}
N_{\theta}\left(  \beta\right)  =P({\alpha},{\beta\theta}){\binom{\theta
n}{\beta\theta n}} \label{n2}%
\end{equation}
of such codewords of weight $\beta\theta n$. Then
\begin{equation}
\frac{1}{n}\log_{2}N_{\theta}\,{\sim}\,\frac{1}{n}\max_{\beta<1}\log
_{2}\,N_{\theta}\left(  \beta\right)  \,{\sim}\,\max_{\beta<1}\left\{
q({\alpha},{\beta\theta})+{\theta h_{2}(\beta)}\right\}  \smallskip
\smallskip\label{p3}%
\end{equation}
which gives asymptotic equalities (\ref{punct}) and (\ref{f}).
\end{proof}

In the sequel, we show that verification complexity is defined by two
important parameters, $\delta_{\ast}$ and $\theta_{\ast},$ which are the roots
of the equations%
\begin{equation}%
\begin{tabular}
[c]{l}%
$\delta_{\ast}:h_{2}(\delta_{\ast})+q({\alpha},{\delta_{\ast}})=0\smallskip
\smallskip$\\
$\theta_{\ast}:f(\theta)=0.\,$%
\end{tabular}
\ \ \ \ \ \ \ \label{dist1}%
\end{equation}
\emph{Discussion. } Note that $\delta_{\ast}$ is the average relative code
distance in  ensemble $\mathbb{A}(\ell,m)$. Indeed, for ${\theta=1,}$ equality
(\ref{n2}) shows that the average number of codewords $N_{\theta}(\beta)$ of
length $n$ and weight $\beta n$ has the asymptotic order
\begin{equation}
\textstyle\frac{1}{n}\log_{2}N(\beta)\,{\sim}\,h_{2}(\beta)+q({\alpha,}%
\beta)\label{spec}%
\end{equation}
Parameter $\theta_{\ast}$ bounds from below the erasure-correcting capacity of
LDPC codes. Indeed, $f(\theta)<0$ in (\ref{f}) and $N_{\theta}=2^{nf(\theta
)}\rightarrow0$ for any $\theta<\theta_{\ast}.$ Thus, most codes
$C\in\mathbb{A}(\ell,m)$ yield only the single-vector codes $C_{J}%
(\ell,m)\equiv0$ and correct any erased set $J$ of size $\theta n.$ The upper
bounds on the erasure-correcting capacity of LDPC codes are also very close to
$\theta_{\ast}$ and we refer to  papers \cite{Rich-Urb-2001,Pishro-2004},
where this capacity is discussed in detail.

More generally, consider any PI ensemble $\mathbb{C}$ of $q$-ary linear codes.
We say that $\theta_{\ast}$ is the \textit{erasure-correcting capacity }for
ensemble\textit{ }$\mathbb{C}$\textit{ }if for any $\epsilon>0$ the shortened
subcodes $C_{J}$ of length $\theta n,$ $n\rightarrow\infty,$ have expected
size $N_{\theta}$ such that
\begin{equation}
\left\{
\begin{array}
[c]{ll}%
N_{\theta}\rightarrow0, & \text{if}\;\theta\leq\theta_{\ast}-\epsilon
\smallskip\\
N_{\theta}\geq1, & \text{if}\;\theta\geq\theta_{\ast}+\epsilon
\end{array}
\right.  \label{capac}%
\end{equation}
Without ambiguity, we will use the same notation $\theta_{\ast}$ for any lower
bound on the erasure-correcting capacity (\ref{capac}). In this case, we still
have asymptotic condition $N_{\theta}\rightarrow0$ for any $\theta\leq
\theta_{\ast}-\epsilon,$ which is the only condition required for our further
estimates. In particular, we use parameter $\theta_{\ast}$ of (\ref{dist1})
for the LDPC ensembles $\mathbb{A}(\ell,m)$ or $\mathbb{B}(\ell,m).$

For any code rate $R=1-\ell/m$, $\delta_{\ast}$ of (\ref{dist1}) falls below
the GV distance $\delta_{GV}(R)$ of random codes (see \cite{Gallager} and
\cite{Litsyn-2002}). For example, $\delta_{\ast}\sim0.02$ for the
$\mathbb{A}(3,6)$ LDPC ensemble of rate $R=1/2$, whereas $\delta_{GV}\sim
0.11$. On the other hand, $\theta_{\ast}$ also falls below the
erasure-correcting capacity $1-R$ of random linear codes. For example,
$\theta_{\ast}=0.483$ for the ensemble $\mathbb{A}(3,6)$ of LDPC codes of rate
0.5. In our comparison of {LDPC codes and random linear codes, } we will show
that the smaller distances {$\delta_{\ast}$ reduce the verification complexity
for LDPC codes, }despite their weaker erasure-correcting capability
$\theta_{\ast}$ for any code rate $R.$

\subsection{Deterministic techniques for the LDPC ensembles.
\label{sec:SW-LDPC}}

\begin{proposition}
\label{prop:SW-1}Consider any PI ensemble of codes $\mathbb{C}$ with the
average relative distance {$\delta_{\ast}$} and the erasure-correcting
capacity $\theta_{\ast}$. For most codes $C\in\mathbb{C}$, the SW technique
performs distance verification with complexity of exponential order $q^{Fn}$
or less, where
\begin{equation}
F=(1-\theta_{\ast})h_{q}(\delta_{\ast}) \label{sw-ldpc1}%
\end{equation}

\end{proposition}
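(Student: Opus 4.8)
The plan is to adapt the sliding-window argument of Proposition \ref{prop:SW} to the LDPC setting, replacing the generic full-rank claim on punctured/shortened matrices with the erasure-correcting property encoded in $\theta_{\ast}$. First I would fix a window size $s=(1-\theta_{\ast}+\epsilon)n$ for a small $\epsilon>0$; the key point is that the $n-s=(\theta_{\ast}-\epsilon)n$ positions \emph{outside} a window form an erased set $J$, and by the definition of erasure-correcting capacity (\ref{capac}) the shortened code $C_{J}$ has expected null-free size $N_{\theta}\to 0$ with $\theta=\theta_{\ast}-\epsilon$. Summing over all $n$ windows and applying Markov's inequality, the expected number of windows $I(i,s)$ on which $C_{J(i)}\neq\{0\}$ tends to $0$, so for most codes $C\in\mathbb{C}$ \emph{every} complementary set of size $(\theta_{\ast}-\epsilon)n$ is correctable: each window is a valid information set, and re-encoding (equivalently, erasure correction on the outside positions) is unambiguous. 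I would then note, exactly as in the remark after Proposition \ref{prop:SW}, that a minimum-weight codeword of weight $\delta_{\ast}n$ has average relative weight $\delta_{\ast}$ on some window, so it restricts to a vector of weight at most $\lceil \delta_{\ast} s\rceil$ there.

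Next I would count: the algorithm enumerates, for each of the $n$ windows, all vectors of weight $v=\lfloor \delta_{\ast}s\rfloor$ supported on that window, of which there are at most $\binom{s}{v}(q-1)^{v}$, and for each such vector performs erasure correction (polynomial cost) to extend it to length $n$, stopping when a global codeword of weight $\delta_{\ast}n$ is produced. The dominant term is $n\binom{s}{v}(q-1)^{v}=q^{\,s\,h_{q}(v/s)(1+o(1))}=q^{\,(1-\theta_{\ast}+\epsilon)\,h_{q}(\delta_{\ast})\,n(1+o(1))}$, using $\lim_n \frac1n\log_q\binom{s}{v}(q-1)^v = \frac{s}{n}h_q(v/s)$ and $v/s\to\delta_{\ast}$. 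Letting $\epsilon\to 0$ gives complexity exponent $F=(1-\theta_{\ast})h_{q}(\delta_{\ast})$ as claimed in (\ref{sw-ldpc1}). One must also check that the enumeration actually \emph{finds} a codeword: since the true minimum-weight codeword restricts to a weight-$\le v$ vector on some window and $C_{J}$ is trivial for the complementary $J$, the unique erasure completion of that restriction is exactly that codeword, so the search succeeds.

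A few technical points need care. The preliminary step of verifying that a given code $C$ lies in the "good" sub-ensemble — those for which all $n$ windows are valid information sets — should be done by actually attempting erasure correction on each of the $n$ complementary sets, which is polynomial; codes failing this test are discarded, and they form a vanishing fraction by the Markov/union bound above. Also, if $\delta_{\ast}s$ is not an integer one enumerates weights in $\{v, v+1\}$ (as in Proposition \ref{prop:SW}), which does not change the exponent. Finally, as in Corollary \ref{cor:SW}, for quantum LDPC codes one applies the procedure to the quaternary code $C^{\bot}$ and additionally checks that the produced codeword of weight $\delta_{\ast}n$ does not lie in the degeneracy code $C$; this is again polynomial and leaves the exponent unchanged, so the same bound (\ref{sw-ldpc1}) holds with $q=4$.

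I expect the main obstacle to be the rigorous "most codes" bookkeeping: one needs the expected number of bad windows (summed over all $n$ shifts) to vanish, which requires the $N_{\theta}\to 0$ convergence in (\ref{capac}) to be fast enough to survive multiplication by $n$. For the LDPC ensembles $\mathbb{A}(\ell,m)$, $\mathbb{B}(\ell,m)$ this is immediate because $N_{\theta}=2^{nf(\theta)}$ with $f(\theta)<0$ strictly for $\theta<\theta_{\ast}$ (Lemma \ref{lm:ldpc} and the discussion after (\ref{dist1})), so $nN_{\theta}\to 0$; for a general PI ensemble specified only through a lower bound $\theta_{\ast}$ on its capacity, the hypothesis (\ref{capac}) with the margin $\epsilon$ already gives $N_{\theta}\to 0$, and one absorbs the polynomial factor $n$ by shrinking $\epsilon$ slightly or simply notes that $N_{\theta}\to0$ plus Markov suffices to discard a vanishing fraction of codes for each fixed window, then unions over the $n$ windows.
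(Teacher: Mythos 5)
Your algorithm and your exponent computation coincide with the paper's: windows of length $s=(1-\theta_{\ast}+\epsilon)n$, enumeration of the $(q-1)^{v}\binom{s}{v}$ window vectors of weight $v=\lfloor\delta_{\ast}s\rfloor$, completion by erasure correction on the complementary set $J$, and the limit $\epsilon\to0$ giving $F=(1-\theta_{\ast})h_{q}(\delta_{\ast})$ as in (\ref{sw-ldpc1}). The gap is in the ``most codes'' bookkeeping. You insist that, for most codes, \emph{every} one of the $n$ complementary sets of size $(\theta_{\ast}-\epsilon)n$ be fully correctable ($C_{J}=\{0\}$), discarding codes that fail this test; that requires the union bound $n\Pr[C_{J}\neq\{0\}]\leq nN_{\theta}\to0$. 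For a general PI ensemble the capacity definition (\ref{capac}) only guarantees $N_{\theta}\to0$ with no rate, so if $N_{\theta}$ decays slower than $1/n$ your ``good sub-ensemble'' need not contain most codes. Neither of your proposed repairs closes this: shrinking $\epsilon$ buys nothing because (\ref{capac}) still gives no decay rate at the smaller $\theta$, and ``Markov per window, then union over the $n$ windows'' is exactly the same $nN_{\theta}$ bound. Your route is fine for the ensembles $\mathbb{A}(\ell,m)$, $\mathbb{B}(\ell,m)$, where $N_{\theta}=2^{nf(\theta)}$ with $f(\theta)<0$ strictly (Lemma \ref{lm:ldpc}), but it does not prove the proposition in the stated generality.

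The paper sidesteps this by never requiring the windows to be information sets: erasure correction on $J$ is allowed to return the full list of completions (of size $q^{b_{J}}$ for that code and window), the expected total work over all $n$ windows is at most $\mathrm{poly}(n)\,L\,(1+N_{\theta})$, and Markov's inequality is applied to the running time itself, so that only a fraction $1/n$ of codes exceed $n$ times the mean. This needs only $N_{\theta}$ vanishing (indeed, bounded), with no rate, and it keeps the algorithm correct for \emph{every} code in the ensemble --- all weight-$d$ codewords are found --- with ``most codes'' qualifying only the complexity bound, not the validity of the search. If you replace your all-windows-correctable step by this expected-complexity argument, the rest of your write-up (the weight-$v$ window existence, the enumeration count, the polynomial preprocessing, and the quantum adaptation via $C^{\bot}$ with the degeneracy check) goes through unchanged.
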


\begin{proof}
We use the generic SW technique but select sliding windows $I=I(i,s)$ of
length $s=(1-\theta_{\ast}+\varepsilon)n$. Here $\varepsilon>0$ is a parameter
such that $\varepsilon\rightarrow0$ as $n\rightarrow\infty$. For a given
weight $d=\delta_{\ast}n,$ we again inspect each window $I(i,s)$ and take all
$L$ punctured vectors $c_{I(i,s)}$ of average weight $v=\left\lfloor
\delta_{\ast}s\right\rfloor .$ Thus,$^{{}}$%
\[
\textstyle\frac{1}{n}\log_{q}L{\,{\sim}}\textstyle\frac{1}{n}\log
_{q}\textstyle(q-1)^{v}{\binom{s}{v}}\sim(1-\theta_{\ast}+\varepsilon
)h_{q}(\delta_{\ast})
\]
For each vector $c_{I(i,s)},$ we recover symbols on the complementary set
$J=\overline{I}$ of size $(\theta_{\ast}-\varepsilon)n,$ by correcting
erasures in a given code $C\in\mathbb{C}$. This recovery is done by encoding
each vector $c_{I(i,s)}$ into $C$ and gives the codeword list of expected size
$N_{\theta}.$ Thus, codes $C$ have the average complexity of $n^{3}N_{\theta
}L$ combined for all $n$ subsets $I.$ Then only a fraction $n^{-1}$ of such
codes may have a complexity above $n^{4}N_{\theta}L$. This gives
(\ref{sw-ldpc1}) as $\varepsilon\rightarrow0$.\smallskip\
\end{proof}

We proceed with the MB technique, which can be applied to any linear code. For
$q$-ary codes, the MB technique gives the complexity exponent $F=${$h_{q}$%
}${(\delta_{\ast})}/2.$ Combining Propositions \ref{prop:MB} and
\ref{prop:SW-1}, we have

\begin{corollary}
Distance verification for most LDPC codes in the ensembles $\mathbb{A}%
(\ell,m)$ or $\mathbb{B}(\ell,m)$ can be performed with the complexity exponent
\begin{equation}
F=\min\{{(1-\theta}_{\ast}{)}h_{2}{(\delta_{\ast})},h_{2}{(\delta_{\ast})}/2\}
\label{comb}%
\end{equation}
where parameters {$\delta_{\ast}$} and ${\theta}_{\ast}$ are defined in
(\ref{dist1}).\smallskip\
\end{corollary}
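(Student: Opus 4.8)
The plan is to combine the LDPC-specific SW bound of Proposition~\ref{prop:SW-1} with the generic MB bound recalled just above the corollary, both specialized to the binary ensembles $\mathbb{A}(\ell,m)$ and $\mathbb{B}(\ell,m)$, and then take the better of the two on a code-by-code basis.

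First I would record that both ensembles are permutation-invariant (as noted above) and that the quantities $\delta_{\ast}$ and $\theta_{\ast}$ fixed by~(\ref{dist1}) are exactly the inputs required by Proposition~\ref{prop:SW-1}. Indeed, the Discussion following Lemma~\ref{lm:ldpc} shows, via~(\ref{spec}) together with $h_{2}(\delta_{\ast})+q(\alpha,\delta_{\ast})=0$, that $\delta_{\ast}$ is the average relative distance of $\mathbb{A}(\ell,m)$ (equivalently of $\mathbb{B}(\ell,m)$, which has the same asymptotic spectrum), and that $f(\theta)<0$ for $\theta<\theta_{\ast}$ makes $\theta_{\ast}$ a valid lower bound on the erasure-correcting capacity in the sense of~(\ref{capac}). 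Hence Proposition~\ref{prop:SW-1} applies with $q=2$: for all but a vanishing fraction of codes in either ensemble, the SW technique verifies the distance $\delta_{\ast}n$ with complexity of order $2^{(1-\theta_{\ast})h_{2}(\delta_{\ast})\,n}$ or less.

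Next I would invoke the MB technique, which — unlike the SW bound — is valid for \emph{every} linear code and, as recalled in the paragraph preceding the corollary, certifies the distance $\delta_{\ast}n$ of any binary code with exponent $h_{2}(\delta_{\ast})/2$; this uses only that the true minimum distance is asymptotically $\delta_{\ast}n$, which again holds for most codes by~(\ref{spec}) (the expected number of codewords of weight below $\delta_{\ast}n$ is exponentially small, so by Markov's inequality almost all codes have none, while a codeword of weight $\approx\delta_{\ast}n$ does occur). Running whichever of the two algorithms has the smaller exponent — or both in parallel, halting at the first success — then gives $F=\min\{(1-\theta_{\ast})h_{2}(\delta_{\ast}),\,h_{2}(\delta_{\ast})/2\}$. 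The only point needing a little care, and the closest thing to an obstacle, is the bookkeeping of the "most codes" clauses: the exceptional set for the SW argument (codes with atypically large punctured subcodes $C_{J}$) and the exceptional set for the MB argument (codes whose distance is not $\approx\delta_{\ast}n$) are each a vanishing fraction of the ensemble, so their union still leaves almost all codes, and no genuine new difficulty arises.
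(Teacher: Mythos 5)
Your proposal matches the paper's own argument: the corollary is obtained exactly by combining Proposition~\ref{prop:SW-1} (SW technique, exponent $(1-\theta_{\ast})h_{2}(\delta_{\ast})$ for most codes in the PI ensembles with parameters from~(\ref{dist1})) with Proposition~\ref{prop:MB} (MB technique, exponent $h_{2}(\delta_{\ast})/2$, valid for any linear code), taking the minimum; your extra bookkeeping of the two vanishing exceptional fractions is consistent with, and no less careful than, what the paper does.
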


The PB technique can also be applied to LDPC codes without changes. However,
its analysis becomes more involved. Indeed, syndrome-matching in the PB
technique yields some punctured $(s,k)$ codes $C_{I(i,s)}$, which are no
longer LDPC codes. However, we can still use their weight spectra, which are
defined by the original ensemble $\mathbb{C}$ and were derived in
\cite{Hsu-2008}. Here we omit lengthy calculations and proceed with a more
efficient CS technique.

\subsection{CS technique for LDPC ensembles \label{sec:atyp-LDPC}}

Below we estimate the complexity of the CS technique for any LDPC code
ensemble. Recall from Section \ref{sec:CS} that for most linear random $[n,k]$
codes, all shortened codes $C_{J}$ of length $n-k$ have non-exponential size
$2^{b_{J}}$. This is not proven for the LDPC codes or any other ensemble of
codes. Therefore, we modify the CS technique to extend it to these non-generic
ensembles. In essence, we leave aside the specific structure of parity-check
matrices $H_{J}$. Instead, we use the fact that atypical codes $C_{J}$ with
large size $2^{b_{J}}$ still form a very small fraction of all codes $C_{J}$.

\begin{proposition}
\label{prop:CS-1}Consider any PI ensemble $\mathbb{C}$ of $q$-ary linear codes
with the average relative distance {$\delta_{\ast}$} and the
erasure-correcting capacity $\theta_{\ast}$. For most codes $C\in\mathbb{C}$,
the CS technique performs distance verification with complexity of exponential
order $2^{Fn}$ or less, where
\begin{equation}
F=h_{2}(\delta_{\ast})-{\theta}_{\ast}h_{2}(\delta_{\ast}/{\theta}_{\ast})
\label{cs-exp}%
\end{equation}

\end{proposition}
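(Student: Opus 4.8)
The strategy is to run the generic CS algorithm of Proposition~\ref{prop:CS} essentially verbatim, but to replace the two generic facts that fail for LDPC ensembles --- namely that random codes meet the GV bound and that all square submatrices $H_J$ have subexponential co-rank --- by the two ensemble-specific parameters $\delta_\ast$ and $\theta_\ast$. First I would fix the window sizes: instead of shortening to sets $J$ of size $r=(1-R)n$, I choose $J$ of size $\rho=(\theta_\ast-\varepsilon)n$ with $\varepsilon\to 0$. The covering argument of Proposition~\ref{prop:CS} (Theorem~13.4 of \cite{Erdos-book}) still applies: a collection of $T=T(n,\rho,d)\,n\ln n$ random sets $J$ of size $\rho$, with $T(n,\rho,d)=\binom{n}{d}/\binom{\rho}{d}$, covers every $d$-subset of positions except with probability $e^{-n\ln n}$, where $d=\delta_\ast n$. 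Asymptotically $\tfrac1n\log_2 T(n,\rho,d)\sim h_2(\delta_\ast)-\theta_\ast h_2(\delta_\ast/\theta_\ast)$, which is exactly the claimed exponent (\ref{cs-exp}); so the number of trials is already of the right order and the remaining work is to show that each trial costs only a subexponential factor for most codes.

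\textbf{Controlling the shortened codes.} The key step is bounding $\dim C_J=b_J$. Here I invoke the hypothesis that $\theta_\ast$ is the erasure-correcting capacity (or a lower bound on it) of the PI ensemble $\mathbb{C}$: by definition (\ref{capac}), for $|J|=(\theta_\ast-\varepsilon)n$ the expected null-free size $N_{\theta_\ast-\varepsilon}\to 0$. Summing over all $T$ sets $J$ in the covering, the expected total number of nonzero codewords supported on some $J$ in the covering is at most $T\cdot N_{\theta_\ast-\varepsilon}$; since $N_{\theta_\ast-\varepsilon}$ decays exponentially while $T$ grows only exponentially, I would need $\varepsilon$ large enough in $n$ (but still $\varepsilon\to0$) that this product $\to0$ --- this is the routine trade-off to check. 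By Markov's inequality, then, for all but a vanishing fraction of codes $C\in\mathbb{C}$, every shortened code $C_J$ in the covering is trivial, $C_J=\{0\}$, i.e.\ $b_J=0$ for all $J\in W$. For such a code, each trial $J$ is handled by a single Gaussian elimination on $H_J$ (cost $n^3$): if $H_J$ has a nonzero right kernel we have found a codeword, otherwise we move on. Thus the per-trial cost is polynomial for most codes, and the overall complexity is $n^{O(1)}T(n,\rho,d)=2^{Fn+o(n)}$ with $F$ as in (\ref{cs-exp}).

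\textbf{Finishing and subtleties.} Two points need care. First, the "for most codes" quantifier is over $\mathbb{C}$, while the covering $W$ is itself random; I would condition on a fixed good covering (which exists with probability $1-e^{-n\ln n}$) and then take the expectation over $C\in\mathbb{C}$, so the two randomizations don't interfere. Second, if one only assumes $\theta_\ast$ is a lower bound on the true erasure capacity rather than the exact value, the argument still goes through since all that is used is the decay $N_{\theta_\ast-\varepsilon}\to0$, which the paper already flags as the only requirement. Finally, as in Proposition~\ref{prop:CS}, the passage to quantum stabilizer codes is automatic: apply the procedure to the quaternary code $C^\perp$ and additionally verify that any weight-$d$ codeword found lies in $C^\perp\setminus C$; since the LDPC ensembles $\mathbb{A}(\ell,m)$, $\mathbb{B}(\ell,m)$ and their quantum analogues are all PI, $\delta_\ast$ and $\theta_\ast$ from (\ref{dist1}) are the relevant parameters. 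The main obstacle I anticipate is purely bookkeeping: choosing the rate $\varepsilon=\varepsilon(n)\to0$ so that simultaneously $N_{\theta_\ast-\varepsilon}\cdot T(n,\rho,d)\to0$ and the loss in the exponent from $\theta_\ast-\varepsilon$ versus $\theta_\ast$ vanishes --- both are controlled because the entropy expression $h_2(\delta_\ast)-\theta h_2(\delta_\ast/\theta)$ is continuous in $\theta$, but the rates must be reconciled carefully.
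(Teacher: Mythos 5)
Your setup (windows of size $(\theta_\ast-\varepsilon)n$, a random covering of $T=(n\ln n)\binom{n}{d}/\binom{s}{d}$ sets, and the identification of $\frac1n\log_2 T$ with the exponent (\ref{cs-exp})) matches the paper. The gap is in your central step: the claim that one can choose $\varepsilon\to0$ so that $T\cdot N_{\theta_\ast-\varepsilon}\to0$, and hence that for most codes \emph{every} shortened code $C_J$, $J\in W$, is trivial. This cannot work. First, quantitatively: as $\varepsilon\to0$ the decay rate of $N_\theta=2^{nf(\theta)}$ vanishes ($f(\theta)\to f(\theta_\ast)=0$), while $T$ grows at the rate $h_2(\delta_\ast)-\theta h_2(\delta_\ast/\theta)$, which stays bounded away from zero; so $T\,N_\theta$ diverges exponentially, and making it small would force $\theta$ well below $\theta_\ast$, destroying the exponent you are trying to prove. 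Second, and more fundamentally, the conclusion is self-defeating: since $\delta_\ast<\theta_\ast$, a typical code $C$ contains a codeword of weight $d=\delta_\ast n$, and a valid covering $W$ contains some $J$ whose $\theta n$ positions include its support, so $C_J\neq\{0\}$ for that $J$. Equivalently, $\sum_{J\in W}\mathbb{E}\,N_J(C)=T\,N_\theta\gtrsim 1$ whenever the covering does its job, so the product you need to vanish is in fact bounded below. If all $C_J$ in the covering were trivial, the algorithm could never output the weight-$d$ codeword it is supposed to find, so the "good event" you condition on excludes exactly the codes for which distance verification is nontrivial.

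The paper's proof avoids this by never requiring $b_J=0$. It stratifies the shortened ensemble by null-free size $q^{b}-1$, notes that the cost of trial $J$ is at most $n^3(q^{b_J}-1)$ plus the Gaussian elimination, and uses permutation invariance to write the expected per-trial enumeration cost as $n^3 N_\theta$ (equations (\ref{ave1})--(\ref{ave2})). By linearity over the $T$ trials the expected total complexity is $n^3 T N_\theta$, and Markov's inequality is applied to the \emph{complexity} random variable, not to codeword counts: only a fraction $N_\theta/n\to0$ of codes can exceed complexity $n^4T$, which gives $F\le\lim\frac1n\log_2(n^4T)$ for most codes. Note this argument needs only $N_\theta\to0$ (indeed $N_\theta/n\to0$ suffices), not $T N_\theta\to0$, which is precisely why the limit $\varepsilon\to0$ is harmless there. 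To repair your proof, replace the "all $C_J$ trivial" claim with this expected-cost/Markov argument; the rest of your write-up (covering probability, conditioning on a fixed good covering, using only the lower-bound property of $\theta_\ast$) then goes through as in the paper.
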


\begin{proof}
We now select sets $J$ of $s={\theta n}$ positions, where ${\theta
=\theta_{\ast}-\varepsilon}$ and ${\varepsilon\rightarrow0}$ as $n\rightarrow
\infty$. To find a codeword of weight $d$ in a given code $C\in\mathbb{C},$ we
randomly pick up $T=(n\ln n){\binom{n}{d}}/{\binom{s}{d}}$ sets $J.$ For any
$J,$ the shortened code ensemble $\mathbb{C}_{J}$ has the expected null-free
size $N_{\theta}\rightarrow0$. Let $\mathbb{C}_{J}(b)\subset\mathbb{C}_{J}$ be
a sub-ensemble of codes $C_{J}(b)$ that have null-free size $q^{b}-1$ for some
$b=0,\ldots,{\theta n.}$ Also, let $\alpha_{\theta}(b)$ be the fraction of
codes $C_{J}(b)$ in the ensemble $\mathbb{C}_{J}$. Then
\begin{equation}
N_{\theta}=\sum_{b=0}^{{\theta n}}\left(  q^{b}-1\right)  \alpha_{\theta
}(b)\label{ave1}%
\end{equation}
For each code $C_{J}(b),$ we again apply Gaussian elimination to its
parity-check matrix $H_{J}$ of size $r\times s$. Similarly to the proof of
Proposition\textit{ }\ref{prop:CS}, we obtain the diagonalized matrix
$\mathcal{H}_{J}$, which consists of $s-b$ unit columns $u_{i}=(0\ldots
01_{i}0\ldots0)$ and $b$ other columns $g_{j}$. To find the lightest codewords
on a given set $J$, we again consider all $q^{b}-1$ non-zero linear
combinations of $b$ columns $g_{j}$. For any given code $C_{J}(b)$, this gives
complexity of order $\mathcal{D}_{\theta}(i)\leq n^{3}+rb(q^{b}-1)\leq
n^{3}(q^{b}-1)$. Taking all codes $C_{J}(b)$ for $b=0,\ldots,{\theta n}$ on a
given set $J$, we obtain the expected complexity
\begin{equation}
\mathcal{D}_{\theta}=\sum_{b=0}^{{\theta n}}n^{3}(q^{b}-1)\alpha_{\theta
}(b)=n^{3}N_{\theta}\label{ave2}%
\end{equation}
Thus, the CS algorithm has the expected complexity $\mathcal{D}_{ave}%
=n^{3}TN_{\theta}$ for all $T$ sets $J$. Then only a vanishing fraction
$N_{\theta}/n$ of codes $C$ have complexity $\mathcal{D}\geq n^{4}T,$ which
gives the exponent $F\leq\lim\frac{1}{n}\log_{2}\left(  n^{4}T\right)  $ of
(\ref{cs-exp}) for most codes.\ \smallskip
\end{proof}

\emph{Discussion. } Note that Propositions \ref{prop:SW-1} and \ref{prop:CS-1}
employ PI code ensembles $\mathbb{C}$. This allows us to consider all sets $J$
of ${\theta n}$ positions and output \emph{all } codewords of weight $d$ for
most codes $C\in\mathbb{C}.$ If we replace this adversarial model with a less
restrictive channel-coding model, we may correct \emph{most} errors of weight
$d$ instead of all of them. Then we also remove the above restrictions on
ensembles $\mathbb{C}$. Indeed, let us re-define $N_{\theta}$ as the null-free
size of codes $C_{J}$ averaged over all codes $C\in\mathbb{C}$ and all subsets
$J$ of size $\theta n$. Then we use the following statement:

\begin{lemma}
\label{lm:erasure}Let ensemble $\mathbb{C}$ have vanishing null-free size
$N_{\theta}\rightarrow0$ in the shortened codes $C_{J}$ of length $\theta n$
as $n\rightarrow\infty.$ Then most codes $C\in\mathbb{C}$ \ correct most
erasure subsets $J,$ with the exception of vanishing fraction $\sqrt
{N_{\theta}}$ of \ codes $C$ and subsets $J.$
\end{lemma}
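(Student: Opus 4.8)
The plan is to combine two averaging arguments via Markov's inequality, applied in sequence. Let $N_\theta$ be the null-free size of $C_J$ averaged over both $C\in\mathbb{C}$ and all subsets $J$ of size $\theta n$. A code $C$ \emph{fails to correct} the erasure set $J$ precisely when the shortened code $C_J$ is non-trivial, i.e.\ $|C_J|>1$, equivalently its null-free size is $\geq 1$. So the quantity we must control is the fraction of pairs $(C,J)$ for which $C_J$ has null-free size $\geq 1$; since null-free size is a nonnegative integer, this fraction is at most the average null-free size $N_\theta$ by Markov. The subtlety is that we want a statement of the form ``most $C$ correct most $J$'', not merely ``most pairs $(C,J)$ are good'', so we need a second application of Markov to convert the joint bound into an iterated one.

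First I would set up the averaging explicitly. Write $p(C,J)=1$ if $C_J$ is non-trivial and $0$ otherwise; then $\mathbb{E}_{C,J}[p(C,J)] \leq \mathbb{E}_{C,J}[\,\text{null-free size of }C_J\,] = N_\theta$. Define, for each code $C$, the conditional bad-fraction $\phi(C) \equiv \mathbb{E}_J[p(C,J)]$, i.e.\ the fraction of erasure sets $J$ that $C$ fails to correct. Then $\mathbb{E}_C[\phi(C)] \leq N_\theta$. Now apply Markov's inequality to the nonnegative random variable $\phi(C)$ with threshold $\sqrt{N_\theta}$: the fraction of codes $C$ with $\phi(C) \geq \sqrt{N_\theta}$ is at most $N_\theta/\sqrt{N_\theta} = \sqrt{N_\theta}$. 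Hence, outside a fraction $\sqrt{N_\theta}$ of codes $C$, we have $\phi(C) < \sqrt{N_\theta}$, meaning such a code $C$ corrects all but a $\sqrt{N_\theta}$ fraction of erasure sets $J$. Since $N_\theta\to 0$ by hypothesis, $\sqrt{N_\theta}\to 0$, which is exactly the claimed statement: all but a vanishing fraction $\sqrt{N_\theta}$ of codes $C$, and for each such $C$ all but a vanishing fraction $\sqrt{N_\theta}$ of subsets $J$.

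I do not expect a serious obstacle here; the argument is a standard two-step Markov (``Fubini plus Markov twice'') bound. The one point requiring a little care is the identification of ``$C$ corrects the erasure pattern on $J$'' with ``$C_J$ is the trivial code'': an erased coordinate set $J$ is correctable iff no nonzero codeword is supported inside $J$, which is exactly the condition $C_J=\{0\}$, i.e.\ null-free size zero. This is why bounding the \emph{null-free} size (rather than the full size $2^{b_J}=|C_J|$) is the natural quantity, and why the averaged hypothesis $N_\theta\to 0$ is the right one to assume. I would also remark, as the paper does in the surrounding discussion, that no permutation-invariance of $\mathbb{C}$ is needed for this version — the averaging is over $J$ as well as $C$, so the argument applies to arbitrary ensembles, which is the whole point of stating it separately from Propositions~\ref{prop:SW-1} and~\ref{prop:CS-1}.
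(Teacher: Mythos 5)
Your proof is correct and follows essentially the same route as the paper: both arguments identify failure on an erasure set $J$ with the shortened code $C_J$ containing a nonzero codeword, bound the fraction of bad pairs $(C,J)$ by the averaged null-free size $N_\theta$, and then apply Markov's inequality with threshold $\sqrt{N_\theta}$ to pass from a joint average to the statement that all but a $\sqrt{N_\theta}$ fraction of codes fail on at most a $\sqrt{N_\theta}$ fraction of subsets. Your write-up merely makes the intermediate quantity (the paper's $M_\theta$) and the two averaging steps more explicit; no substantive difference.
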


\begin{proof}
A code $C\in\mathbb{C}$ fails to correct some erasure set $J$ \ of weight
$\theta n$ if and only if code $C_{J}$ has $N_{J}(C)\geq1$ non-zero codewords.
\ Let $M_{\theta}$ be the average fraction of such codes $C_{J}$ taken over
all codes $C$ and all subsets $J.$ Note that $M_{\theta}\leq N_{\theta}$. Per
Markov's inequality, no more than a fraction $\sqrt{M_{\theta}}$ of codes $C$
may leave a fraction $\sqrt{M_{\theta}}$ of sets $J$ \ uncorrected. \smallskip
\end{proof}

Finally, we summarize the complexity estimates for classical binary LDPC codes
in Fig. \ref{fig:ldpc}. For comparison, we also plot two generic exponents
valid for most linear binary codes. The first exponent
\begin{equation}
F=\min\{R(1-R),(1-R)/2\} \label{gener0}%
\end{equation}
combines the SW and MB algorithms, and the second exponent (\ref{gener})
represents the CS algorithm. For LDPC codes, we similarly consider the
exponent (\ref{comb}) that combines the SW and MB algorithms and the exponent
(\ref{cs-exp}) that represents the CS algorithm for the LDPC codes. Here we
consider ensembles $\mathbb{A}(\ell,m)$ or $\mathbb{B}(\ell,m)$ for various
LDPC $(\ell,m)$ codes with code rates ranging from $0.125$ to $0.8.$ With the
exception of low-rate codes, all LDPC codes of Fig. \ref{fig:ldpc} have
substantially lower distances than their generic counterparts. This is the
reason LDPC codes also achieve an exponentially smaller complexity of distance
verification despite their lower erasure-correcting capacity.
\begin{figure}[ptbh]
\centering
\includegraphics[width=0.65\columnwidth]{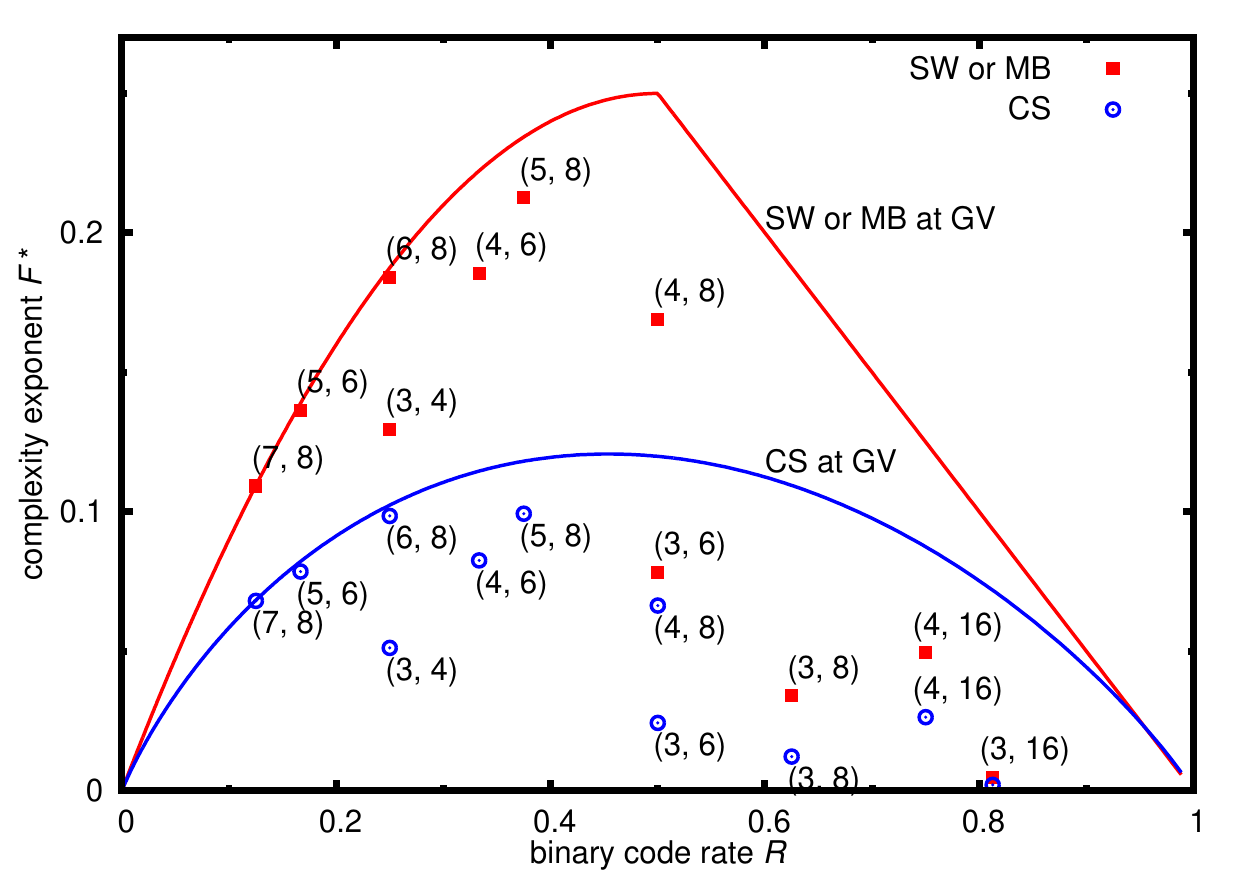} \caption{Complexity
exponents for the binary codes meeting the GV bound and for some ($\ell
,m)$-regular LDPC codes as indicated. \textquotedblleft SW or
MB\textquotedblright\ stands for deterministic techniques from Eq.~(\ref{comb}%
) for LDPC codes, or Eq.~(\ref{gener0}) for codes meeting the GV bound, and CS
stands for covering set technique, Eq.~(\ref{cs-exp}) for LDPC codes, or
Eq.~(\ref{gener}) for codes meeting the GV bound.}%
\label{fig:ldpc}%
\end{figure}

\section{Irreducible-cluster (IC) technique\label{sec:LC-LDPC}}

The complexity estimates of Sec.~\ref{sec:LDPC} rely on the average weight
distributions of binary $(\ell,m)$-regular LDPC codes and hold for most codes
in the corresponding ensembles. Here we suggest a deterministic
distance-verification technique, which is applicable to any $q$-ary $(\ell
,m)$-regular LDPC code, quantum or classical. First, we define irreducible
codewords.\smallskip

\begin{definition}
\label{def:irreducible-cw} Given a linear $q$-ary code $\mathcal{C}_{q}$, we
say that a codeword $c$ is irreducible if it cannot be represented as a linear
combination of two codewords with non-overlapping supports.
\end{definition}

Our technique is based on the following simple lemma.

\begin{lemma}
\label{th:basic}\cite{Dumer-Kovalev-Pryadko-bnd-2015} A minimum-weight
codeword of a linear code $\mathcal{C}_{q}$ is irreducible.
\end{lemma}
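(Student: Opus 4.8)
The plan is to prove Lemma \ref{th:basic} by contraposition: if a codeword $c$ is \emph{reducible}, then it is not of minimum weight. Suppose $c\in\mathcal{C}_q$ is reducible, so that by Definition \ref{def:irreducible-cw} we can write $c=c_1+c_2$ where $c_1,c_2\in\mathcal{C}_q$ have disjoint supports, i.e. $\mathrm{supp}(c_1)\cap\mathrm{supp}(c_2)=\varnothing$. The first step is to observe that disjoint supports force the weights to add: $\wgt(c)=\wgt(c_1)+\wgt(c_2)$. This is immediate because at each coordinate $i$, at most one of $c_1,c_2$ is nonzero, so $c_i=(c_1)_i+(c_2)_i$ is nonzero exactly when one of the summands is nonzero there; hence $\mathrm{supp}(c)=\mathrm{supp}(c_1)\sqcup\mathrm{supp}(c_2)$ and the weights add.

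The second step is to rule out the degenerate case. In the representation $c=c_1+c_2$ with disjoint supports, we should require $c_1$ and $c_2$ to both be \emph{nonzero} for the word to count as genuinely reducible; I would note that the definition is understood this way (otherwise every codeword is trivially ``reducible'' by taking $c_2=0$). Granting this, both $\wgt(c_1)\ge 1$ and $\wgt(c_2)\ge 1$, and since $c\in\mathcal{C}_q$ is itself nonzero (minimum-weight codewords are nonzero by convention), we get $0<\wgt(c_1)<\wgt(c)$ and likewise for $c_2$.

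The conclusion then follows in one line: $c_1$ is a nonzero codeword of $\mathcal{C}_q$ with $\wgt(c_1)<\wgt(c)$, so $c$ cannot have the minimum nonzero weight among codewords. Contrapositively, any minimum-weight codeword of $\mathcal{C}_q$ must be irreducible, which is the claim.

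This lemma is elementary and I do not expect a real obstacle; the only point requiring a moment's care is the bookkeeping in the first step — that disjointness of supports (not merely linear independence) is exactly what makes Hamming weight additive over the $q$-ary sum — together with the convention in the second step that a reducible word decomposes into two \emph{nontrivial} pieces. Once those two conventions are pinned down, the proof is the three short steps above.
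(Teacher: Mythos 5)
Your proof is correct: the contrapositive argument via additivity of Hamming weight over disjoint supports (with the standard convention that both summands in a reducible decomposition are nonzero) is exactly the intended one-line argument. The paper itself states the lemma without proof, citing \cite{Dumer-Kovalev-Pryadko-bnd-2015}, and your write-up matches the standard proof given there, so there is nothing to add.
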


\emph{IC algorithm: general description.} Let a $q$-ary $(\ell,m)$-regular
LDPC code be defined by a list $\mathcal{L}$ of parity checks $b$ with
supports $J_{b}$ of size $m.$ The following algorithm finds an irreducible
codeword $c$ of weight $d$. The algorithm performs multiple runs and includes
a variable number $\omega\leq d-1$ of steps in each run. The initial step
$i=0$ of each run is given a position $j_{0}=0,\ldots,n-1$ and the symbol
$c_{j_{0}}=1$. The input to each consecutive step $i$ includes some previously
derived sub-vector $c(J_{i})$ with its support $J_{i}.$ It also includes the
ordered sublist $\mathcal{N}_{i}\mathcal{\subset L}$ of all parity checks $b$
unsatisfied by sub-vector $c(J_{i}).$ Then step $i$ extends vector $c(J_{i})$
with some non-overlapping subset $c(I_{i})$ of $v_{i}$ new non-zero symbols.
The extension $I_{i},c(I_{i})$ is chosen to make the first parity check
$b^{(1)}\in\mathcal{N}_{i}$ satisfied on the extended support $J_{i+1}%
=J_{i}\cup I_{i}:$
\begin{equation}
\sum_{j\in J_{^{i}}}b_{j}^{(1)}c_{j}+\sum_{j\in I_{^{i}}}b_{j}^{(1)}c_{j}=0
\label{rest0}%
\end{equation}
The result is the extended vector $c(J_{i+1})$ and the new list $\mathcal{N}%
_{i+1}$ of parity checks unsatisfied by $c(J_{i+1}).$ Clearly, $\mathcal{N}%
_{i+1}$ excludes parity check $b^{(1)}.$ It may also drop some other checks in
$\mathcal{N}_{i}$, which were satisfied in step $i,$ but may include new
parity checks, which become unsatisfied due to the newly added symbols. Note
that a parity check dropped in step $i$ may later re-appear in some list
$\mathcal{N}_{s},$ $s>i+1.$ Each run must satisfy restrictions (\ref{rest0})
for all steps and end with $d$ symbols, thus
\begin{equation}
\textstyle\sum_{i=1}^{\omega}v_{i}=d-1 \label{rest1}%
\end{equation}
Each run ends with a complete selection list $\left\{  I_{i},c(I_{i}%
)\,|\,i=0,...,\omega\right\}  $ and gives a codeword of weight $d$ if the list
$\mathcal{N}_{\omega+1}$ is empty. For a quantum stabilizer code, we also
verify the restriction $c\in C^{\bot}\smallsetminus C.$ Given no codeword of
weight $d$, we proceed with a new run, which employs a new selection list$.$
We will now limit possible choices of all vectors $c(I_{i}).$

\emph{Additively irreducible selection. } We say that a new selection $I,c(I)$
of non-zero symbols is \emph{additively irreducible (AI) }for a parity-check
$b$ if any non-empty subset $I^{\prime}\subset I$ satisfies restriction
\begin{equation}
\textstyle\sum_{j\in I^{\prime}}b_{j}c_{j}\neq0 \label{q-parity}%
\end{equation}
From now on, any selection list $\left\{  I_{i},c(I_{i})\,|\,i=0,...,\omega
\right\}  $ must also satisfy restrictions (\ref{q-parity}) in each step $i.$
We proceed with the following observations.\smallskip

\textit{A.} If an $AI$ vector satisfies parity check $b^{(1)}$, then no
smaller subset $c(I^{\prime})$ can do so. Indeed, let restrictions
(\ref{rest0}) hold on the sets $I$ and $I^{\prime}\subset I.$ Then we obtain
equality $\sum b_{j}c_{j}=0$ on the subset $I\smallsetminus I^{\prime},$ which
contradicts (\ref{q-parity}). We also see that for any reducible vector $c(I)$
that satisfies the current check $b^{(1)}$, there exists its sub-vector
$c(I^{\prime}),$ which also satisfies $b^{(1)}.$ \smallskip\

\textit{B. }We may process parity checks one-by-one. Indeed, irrespective of
the order in which parity checks are processed, the codewords will satisfy all
parity checks after $w$ steps. We may also set $c_{j_{0}}=1$ in a linear code
$C.$ Our brute-force algorithm begins with a correct choice of $j_{0}$ for
some runs and then exhausts all possible irreducible selections. Thus, in each
step, one of the runs begins with a correct subvector $c(J_{i})$ and then adds
some correct $AI$ subvector $c(I_{i})$. \smallskip

C. The algorithm may terminate only at some codeword of weight $d.$ More
generally, the algorithm can return all (non-collinear) irreducible vectors up
to some weight $D$.

D. If some run fails in step $w$, we can return to step $w-1$ and exhaust all
choices of vectors $c(I_{w-1}).$ Similarly, we can return to step $w-2$ and so
on. This back-and-forth version slightly reduces the overall complexity;
however, it will keep its asymptotic order.\smallskip

Let $N_{v}(q,b)$ denote the number of $q$-ary vectors $c(I)$ of length $v$
that satisfy restrictions (\ref{rest0}) and (\ref{q-parity}). Clearly,
\begin{equation}
N_{v}(q,b)\leq(q-1)^{v-1}\label{q-triv}%
\end{equation}
Below, we use notation $N_{v}(q)$ since we will prove that all parity checks
$b$ give the same number $N_{v}(q,b)\equiv N_{v}(q).$ \ Note also that the AI
restriction (\ref{q-parity}) drastically limits the number $N_{v}(q)$ for
small $q.$ For example, a binary parity check $b^{(1)}$ is satisfied in
(\ref{rest0}) only if $v$ is odd; however, any string of $v\geq3$ ones
includes a subset of two ones and contradicts the AI property (\ref{q-parity}%
). Thus, $v=1$ for $q=2$ and $N_{v}(2)=1.$ \smallskip

We now proceed with complexity estimates. First, we employ a trivial upper
bound (\ref{q-triv}). We further reduce this number in Lemma \ref{th:basic-1}.

Let $\delta_{a,b}$ be the Kronecker symbol, $h=d-1$ and $t=m-1$. Recall that
each run is defined by some set $\left\{  I_{i},c(I_{i})\,|\,i=0,...,\omega
\right\}  .$ Given restriction (\ref{rest1}), the number of runs is bounded
from above by the quantities
\begin{equation}
S_{h}(m,q)\equiv\sum_{\omega\geq1}\,\sum_{v_{i}\in\{1,2,\ldots,t\}}%
\!\!\delta_{h,v_{1}+\ldots+v_{\omega}}\prod_{i=1}^{\omega}N_{v_{i}%
}(q)\textstyle\binom{t}{v_{i}}\label{eq:upper-bound}%
\end{equation}
which have the power-moment generating function
\begin{align}
g(z) &  =1+\sum_{h=1}^{\infty}S_{h}(m,q)z^{h}=\sum_{\omega=0}^{\infty}\left[
T(z)\right]  ^{\omega}=[1-T(z)]^{-1},\label{eq:gener1}\\
T(z) &  \equiv\sum_{h=1}^{t}z^{h}N_{h}(q)\textstyle{\binom{t}{h}%
}.\label{eq:poly1}%
\end{align}
We can now derive the coefficients $S_{h}(m,q)$. This can be done by the
Chernoff bound, similarly to the estimates of \cite{Gallager} or by the
combinatorial technique of \cite{Litsyn-2002}. Instead, we use another simple
technique that employs contour integration and gives the exact formulas for
the coefficients $S_{h}(m,q)$ along with their exponential orders. Namely, let
the denominator $1-T(z)$ in (\ref{eq:gener1}) have $s\leq t$ distinct roots $z_{r}$,
$r=0,1,\ldots,s-1$, with ordered magnitudes $\rho=|z_{0}|\leq|z_{1}|\leq
\ldots\leq|z_{s-1}|$. Then coefficients $S_{h}(m,q)$ can be derived by a
contour integration over a circle of radius $\epsilon<\rho$ around the
origin,
\begin{equation}
S_{h}(m,q)={\frac{1}{2\pi i}}\oint{\frac{dz}{z^{d}}}{\frac{1}{1-T(z)}}%
=-\sum_{r=0}^{s-1}\mathop{\rm Res}\left(  \frac{1}{z^{d}[1-T(z)]}%
,z_{r}\right)  \label{eq:2}%
\end{equation}
where $\mathop{\rm Res}(f(z),a)$ is the residue of $f(z)$ at $a$. For large
weights $d$, the exponential order of $S_{h}(m,q)$ is defined by the root
$z_{0}$, which has the smallest magnitude $\rho$. Next, note that $z_{0}%
=\rho>0$ is strictly positive and non-degenerate, since the coefficients of
$T(z)$ are non-negative. In this case,
\begin{equation}
\mathop{\rm Res}\left(  \frac{1}{z^{d}[1-T(z)]},z_{0}\right)  =-\frac{1}%
{z_{0}^{d}T^{\prime}(z_{0})}\label{eq:root}%
\end{equation}
where ${T}^{\prime}(z)$ is the derivative of the polynomial $T(z)$; it is
non-negative at $z=z_{0}$. This gives the exponential bound
\begin{equation}
S_{h}(m,q)\leq c\rho^{-d}+\mathcal{O}(|z_{1}|^{-d})\sim c[\gamma_{m}%
(m-1)]^{d}\label{eq:Sv-bound}%
\end{equation}
with the complexity exponent $\gamma_{m}\equiv1/[(m-1)\rho]$.\smallskip

We now employ upper bound (\ref{q-triv}). \ In this case, equality
(\ref{eq:poly1}) gives the polynomial
\[
\overline{T}(z)={\frac{1}{q-1}}\left\{  \left[  (q-1)z+1\right]
^{t}-1\right\}
\]
which has the roots
\[
z_{r}=(q^{1/t}e^{2\pi ir/t}-1)/(q-1),\;r=0,1,\ldots,t-1
\]
Thus, the asymptotic expansion (\ref{eq:Sv-bound}) yields the constant
\[
c=\frac{1+(q-1)\rho}{qt\,}=\frac{q^{1/(m-1)}}{q(m-1)}%
\]
and the complexity exponent
\begin{equation}
\overline{\gamma}_{m}=\frac{q-1}{(m-1)\left(  q^{1/(m-1)}-1\right)  }%
\leq\overline{\gamma}_{\infty}={\frac{q-1}{\ln q}}\label{eq:gamma-bnd3}%
\end{equation}
As a side remark, note that the larger values $v_{i}>1$ reduce the number of
terms in the product taken in (\ref{eq:upper-bound}); therefore, they
contribute relatively little to the overall sum $S_{h}(m,q).$ \ It is for this
reason that a simple bound (\ref{q-triv}) can yield a reasonably tight
estimate (\ref{eq:gamma-bnd3}). \

Our next step is to reduce the exponent $\overline{\gamma}_{m}$ by limiting
the number $N_{v}(q,b).$ Let $M_{v}(q)$ denote the set of $q$-ary vectors
$c(I)$ of length $v$ that satisfy the restrictions
\begin{equation}
\sum_{I^{\prime}}c_{j}\neq0\text{ for all }I^{\prime}\subseteq I \label{AI-1}%
\end{equation}
Let $A_{v}(q)$ be the size of $M_{v}(q)$ and $v_{\max}$ be the maximum length
of vectors in $M_{v}(q).$

\begin{lemma}
\label{th:basic-1} The number $N_{v}(q,b)$ of $q$-ary vectors $c(I)$ of length
$v,$ which satisfy restrictions (\ref{rest0}) and (\ref{q-parity}) in a Galois
field $F_{q},$ does not depend on a parity check $b$ and is equal to
$A_{v}(q)/(q-1).$ For any $q=2^{u},$ $v_{\max}=u$ and $N_{v}(q)=(q-2)\cdot
\ldots\cdot(q-2^{v-1}).$ For a prime number $q,$ $v_{\max}=q-1.$
\end{lemma}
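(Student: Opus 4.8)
The plan is to establish a bijection between the set of vectors $c(I)$ satisfying the two restrictions (\ref{rest0}) and (\ref{q-parity}) for a fixed parity check $b$, and a $(q-1)$-fold cover of the set $M_v(q)$ of ``purely additive'' vectors satisfying (\ref{AI-1}). First I would observe that since $b_j \neq 0$ on the support $J_b$, the substitution $c_j' \equiv b_j c_j$ is an invertible $\mathbb{F}_q$-linear rescaling on each coordinate $j \in I \subseteq J_b$; it turns $\sum_{I'} b_j c_j$ into $\sum_{I'} c_j'$, so restriction (\ref{q-parity}) for $b$ becomes exactly condition (\ref{AI-1}) for $c'$, while (\ref{rest0}) becomes $\sum_{j \in I} c_j' = 0$. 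This immediately shows $N_v(q,b)$ is independent of $b$ (any two parity checks differ only by such a rescaling), so we may write $N_v(q)$. For the count: condition (\ref{AI-1}) says every nonempty partial sum is nonzero; in particular the \emph{full} sum $\sigma \equiv \sum_{j\in I} c_j'$ is nonzero, so it takes one of $q-1$ values, and by scaling the whole vector $c'$ by $\sigma^{-1}$ (which preserves (\ref{AI-1})) each value of $\sigma$ is equally represented. The vectors with full sum equal to $0$ are exactly those counted by $N_v(q)$ (after undoing the rescaling), but $M_v(q)$ by definition has all full sums nonzero — so I need to be slightly careful here: $M_v(q)$ consists of vectors all of whose partial sums including the full one are nonzero, and among these the full sum is uniformly distributed over $\mathbb{F}_q^*$; scaling to fix the full sum at any chosen nonzero value gives $A_v(q)/(q-1)$ vectors. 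The relation to $N_v(q)$ comes from the \emph{prefix} encoding: a vector $c(I) = (c_1,\dots,c_v)$ satisfying (\ref{rest0}) and (\ref{q-parity}) has all proper partial sums $s_k = c_1+\dots+c_k$ ($k<v$) nonzero while $s_v = 0$; recording the sequence $(s_1, s_2 - s_1, \dots)$ differently, map $c(I)$ to the vector $(s_1,\dots,s_{v-1}) \in (\mathbb{F}_q^*)^{?}$ — but one must also ensure \emph{all} partial sums of that new vector are distinct, i.e. the $s_k$ are pairwise distinct, which is precisely (\ref{q-parity}) applied to intervals. So a cleaner route: identify $c(I)$ satisfying both restrictions with a sequence $0 = s_0, s_1, \dots, s_{v-1}, s_v = 0$ of partial sums in which $s_1,\dots,s_{v-1}$ are \emph{distinct and nonzero} — no wait, (\ref{q-parity}) concerns \emph{all} subsets $I'$, not just prefixes, so the correct statement is that $c(I)$ maps bijectively onto the set of injective functions from a $v$-element ordered set into $\mathbb{F}_q$... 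This is exactly where the main obstacle lies, and I will address it next.

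The main obstacle is correctly translating the ``all nonempty subsets'' condition (\ref{q-parity}) into a transparent combinatorial object. The key realization is that $\sum_{j\in I'} c_j \neq 0$ for \emph{every} nonempty $I' \subseteq I$ is equivalent to requiring that the $2^v$ subset-sums of $c(I)$ are pairwise distinct \emph{only when} the ambient structure is right — but in a field of characteristic $2$ it is genuinely equivalent to the statement that $c_1, \dots, c_v$ together with $0$ generate $2^v$ distinct subset sums, hence $v \leq u$ when $q = 2^u$, and moreover the $c_j$ must be $\mathbb{F}_2$-linearly independent. I would argue: in characteristic $2$, $\sum_{I'} c_j = \sum_{I''} c_j$ for $I' \neq I''$ iff $\sum_{I' \triangle I''} c_j = 0$, so (\ref{AI-1}) (equivalently (\ref{q-parity}) after rescaling) holds iff all subset sums are distinct iff $\{c_j\}$ is $\mathbb{F}_2$-independent in the $u$-dimensional $\mathbb{F}_2$-space $\mathbb{F}_{2^u}$. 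Hence $v_{\max} = u$, and the number of ordered $\mathbb{F}_2$-independent $v$-tuples with \emph{nonzero} full sum — wait, the full sum of an independent set is automatically nonzero — is $(q-1)(q-2)(q-4)\cdots(q - 2^{v-1})$ by the standard count of ordered independent tuples (choose $c_1 \neq 0$: $q-1$ ways; $c_2 \notin \langle c_1\rangle$: $q-2$ ways; $c_3 \notin \langle c_1,c_2\rangle$: $q-4$ ways; etc.). That is $A_v(q) = (q-1)\prod_{i=1}^{v-1}(q-2^i)$, and dividing by $q-1$ gives $N_v(q) = \prod_{i=1}^{v-1}(q-2^i) = (q-2)(q-4)\cdots(q-2^{v-1})$, matching the claim.

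For the prime case, I would instead argue directly on $\mathbb{Z}/q$: condition (\ref{AI-1}) for a vector of length $v$ forces, via the prefix sums $s_0 = 0, s_1, \dots, s_v$, that actually the relevant obstruction is the Erd\H{o}s--Ginzburg--Ziv / Davenport-constant style fact that any $q$ elements of $\mathbb{Z}/q$ have a nonempty zero-sum subset (Davenport constant of $\mathbb{Z}/q$ is $q$), so $v_{\max} = q-1$; conversely $c = (1,1,\dots,1)$ of length $q-1$ shows $v = q-1$ is attainable since its partial sums are $1, 2, \dots, q-1$, all nonzero. I would present the $v_{\max} = q-1$ bound as the consequence of this classical zero-sum result and note that a full closed form for $A_v(q)$ (hence $N_v(q)$) for general prime $q$ and $v < q-1$ is more delicate and not needed for the asymptotic complexity bound — only the range of $v$ and the $q = 2^u$ formula are used downstream in estimating $T(z)$ and the exponent $\gamma_m$. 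I expect the independence-equals-distinct-subset-sums step in characteristic $2$ to be the crux; once that is in hand the counting is the routine ordered-basis computation, and the $b$-independence is immediate from the coordinatewise rescaling $c_j \mapsto b_j c_j$.
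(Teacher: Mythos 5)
Your route is essentially the paper's: coordinatewise rescaling $c_j\mapsto b_jc_j$ to remove the dependence on $b$, reduction to the set $M_v(q)$ together with the scaling symmetry that distributes full sums uniformly over $\mathbb{F}_q^{*}$, the $\mathbb{F}_2$-independence characterization for $q=2^u$ with the ordered-independent-tuple count, and the all-ones vector for prime $q$. But there is a genuine unresolved flaw at the central step $N_v(q)=A_v(q)/(q-1)$. You assert that after rescaling, (\ref{rest0}) becomes $\sum_{j\in I}c_j'=0$, and later that a vector counted by $N_v(q)$ has full sum $s_v=0$. That is wrong, and it is exactly what produces the contradiction you notice but never resolve: a vector with zero full sum cannot lie in $M_v(q)$, and indeed cannot satisfy (\ref{q-parity}) at all, since $I'=I$ is an admissible subset there — so under your reading $N_v(q)$ would be $0$. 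The missing observation (which is the crux of the paper's proof) is that (\ref{rest0}) contains the contribution $\lambda=\sum_{j\in J_i}b_j^{(1)}c_j$ of the already-fixed symbols, and $\lambda\neq0$ precisely because $b^{(1)}$ is an \emph{unsatisfied} check in $\mathcal{N}_i$. Hence the new block must satisfy $\sum_{j\in I}b_jc_j=-\lambda\in\mathbb{F}_q^{*}$, i.e., after rescaling it is a vector of $M_v(q)$ with a \emph{prescribed nonzero} full sum; your own scaling remark ("full sums uniform over $\mathbb{F}_q^{*}$") then yields exactly $A_v(q)/(q-1)$ such vectors. You promise to "address it next" but the second paragraph never returns to this point; it only divides $A_v(q)$ by $q-1$ without justifying why that quotient is what $N_v(q)$ requires. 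The "prefix encoding" detour should simply be dropped: (\ref{q-parity}) constrains all subsets, not prefixes, and outside characteristic $2$ the two are not equivalent, as you yourself note.

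Once the $\lambda\neq0$ point is inserted, the remainder of your argument is correct and coincides with the paper's: in characteristic $2$, (\ref{AI-1}) is equivalent to $\mathbb{F}_2$-linear independence of the symbols, giving $v_{\max}=u$ and $A_v(q)=(q-1)(q-2)\cdots(q-2^{v-1})$, hence $N_v(q)=(q-2)\cdots(q-2^{v-1})$. For prime $q$ you invoke the Davenport-constant/zero-sum fact to get $v_{\max}\leq q-1$, where the paper uses a more elementary argument (along a maximal chain of nested subsets the sums are pairwise distinct and nonzero, so $v\leq q-1$, an argument needing no external result and valid for every $q$); both are fine, and achievability via $c=(1,\ldots,1)$ is the same.
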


\begin{proof}
Let two sets of $q$-ary vectors $c(I,b)$ and $c(I,B)$ of length $v$ satisfy
restrictions (\ref{rest0}) and (\ref{q-parity}) for some parity checks $b$ and
$B.$ Then any such vector $c(I,B)$ has its counterpart $c(I,b)$ with symbols
$c_{j}(I,b)=B_{j}c_{j}(I,b)/b_{j}.$ Thus, the two sets have the same size and
$N_{v}(q,b)=N_{v}(q)$. We can also specify AI-restrictions (\ref{q-parity})
using AI-restrictions (\ref{AI-1}) for the parity check $b^{\ast}=(1,...,1)$
and all subsets $I^{\prime}\subset I.$ \ Now let $\lambda\neq0$ be the value
of the first summand in (\ref{rest0}) for some unsatisfied parity check.
Consider a subset of vectors in $M_{v}(q)$ that satisfy restriction $\sum
_{I}c_{j}=-\lambda.$ This subset has the size $A_{v}(q)/(q-1)$ and satisfies
both restrictions (\ref{rest0}) and (\ref{q-parity}) for the parity check
$b^{\ast}.$ Thus, $N_{v}(q)=A_{v}(q)/(q-1).$

Next, consider the Galois field $F_{q}$ for $q=2^{u}.$ Then the sums in the
left-hand side of (\ref{AI-1}) represent all possible linear combinations over
$F_{2}$ \ generated by $v$ or fewer elements of $M_{v}(q).$ Thus, any symbol
$c_{j}(I)$ must differ from the linear combinations of the previous symbols
$c_{1}(I),...,c_{j-1}(I).$ This gives the size $A_{v}(q)=(q-1)(q-2)\cdot
\ldots\cdot(q-2^{v-1})$ and also proves that $v_{\max}=u.$

For any prime number $q,$ any sum of $s$ elements in (\ref{AI-1}) must differ
from the sums of $t<s$ elements on its subsets. Thus, different sums may take
at most $v_{\max}$ non-zero values for $s=1,...,v_{\max}$ and $v_{\max}\leq
q-1.$ Then $v_{\max}=q-1$ is achieved on the vector $c=(1,...,1)$ of length
$q-1$. \smallskip
\end{proof}

Lemma \ref{th:basic-1} shows that the numbers $N_{v}(q)$ and the lengths
$v_{\max}$ differ substantially for different $q.$ Some of these quantities
are listed in Table \ref{tab:Nv} for small $q$. Table \ref{tab:even-odd} gives
some exponents $\gamma_{m}$ obtained for irreducible clusters, along with the
upper bound $\overline{\gamma}_{\infty}$ (valid for all clusters) in the last
row. We summarize our complexity estimates as follows.\smallskip

\begin{proposition}
\label{prop:IC-1}A codeword of weight $\delta n$ in any $q$-ary $(\ell,m)$
LDPC code can be found with complexity $2^{F_{\mathrm{\mathrm{IC}}}n}$, where
\[
F_{\mathrm{\mathrm{IC}}}=\delta\log_{2}(\gamma_{m}(m-1)),
\]
$\gamma_{m}\in(1,\gamma_{\infty})$ grows monotonically with~$m$ and
$\gamma_{\infty}<\overline{\gamma}_{\infty}=\left(  q-1\right)  /\ln
q$.\smallskip
\end{proposition}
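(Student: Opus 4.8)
The plan is to establish the proposition in three parts: (a) the IC algorithm is \emph{complete} for every $q$-ary $(\ell,m)$-regular LDPC code, i.e. for the target weight $d=\delta n$ it is guaranteed to output a weight-$d$ codeword whenever one exists; (b) its running time is $\mathrm{poly}(n)$ times the number of admissible runs $S_{d-1}(m,q)$; and (c) the exponential order of $S_{d-1}(m,q)$ gives the exponent $F_{\mathrm{IC}}$ together with the stated behaviour of $\gamma_m$. For (a): by Lemma~\ref{th:basic} a minimum-weight codeword $c$ of weight $d$ is irreducible, and after rescaling we may take $c_{j_0}=1$ for some position $j_0$. Iterating over all $n$ starting positions $j_0$ and, at every step $i$, over all additively-irreducible selections $(I_i,c(I_i))$ satisfying the first currently-unsatisfied check — restrictions (\ref{rest0}) and (\ref{q-parity}) — observation~B guarantees that one run reconstructs $c$ symbol by symbol, while observations~A and~C guarantee that the procedure halts only at a codeword of weight exactly $d$; running $d=2,3,\dots$ one reaches the minimum weight. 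For stabilizer/CSS codes one additionally tests $c\in C^{\bot}\setminus C$, a polynomial check. This settles the claim for \emph{every} such code, not merely for most codes.

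For (b), a single run has $\omega\le d-1$ steps, and each step — scanning the unsatisfied-check list $\mathcal N_i$, enumerating by Gaussian elimination the valid extensions of its first check, and updating the support and $\mathcal N_{i+1}$ — costs $\mathrm{poly}(n)$. Summing $\prod_i N_{v_i}(q)\binom{m-1}{v_i}$ over all compositions $v_1+\dots+v_\omega=d-1$ and over the $n$ choices of $j_0$ bounds the number of runs by $n\,S_{d-1}(m,q)$ with $S_h(m,q)$ as in (\ref{eq:upper-bound}); thus the total cost is $\mathrm{poly}(n)\cdot S_{d-1}(m,q)$. The composition structure yields the generating-function identity (\ref{eq:gener1}), $\sum_h S_h(m,q)z^h=[1-T(z)]^{-1}$ with $T$ from (\ref{eq:poly1}), so $S_{d-1}(m,q)$ is extracted by the contour integral (\ref{eq:2}). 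Since $T$ has non-negative coefficients and $T(0)=0$, it is strictly increasing on $[0,\infty)$, whence its smallest-magnitude root $z_0=\rho>0$ is real, positive and simple; the residue (\ref{eq:root}) gives $S_{d-1}(m,q)=c\rho^{-d}(1+o(1))$ with the remaining residues contributing only $\mathcal{O}(|z_1|^{-d})$, $|z_1|>\rho$. Hence $F_{\mathrm{IC}}=\lim\frac1n\log_2\bigl(\mathrm{poly}(n)S_{d-1}(m,q)\bigr)=\delta\log_2(\rho^{-1})=\delta\log_2(\gamma_m(m-1))$, with $\gamma_m=1/[(m-1)\rho]$ as in (\ref{eq:Sv-bound}).

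For (c), put $t=m-1$ and substitute $u=tz$: the root equation becomes $\sum_{h=1}^{v_{\max}}N_h(q)\bigl(\binom th/t^h\bigr)u^h=1$, a polynomial of finite degree $v_{\max}$ by Lemma~\ref{th:basic-1}, whose smallest positive root is $u_m^{\ast}=t\rho$, so $\gamma_m=1/u_m^{\ast}$. Since $N_1(q)=A_1(q)/(q-1)=1$ and, for $q\ge3$, $N_2(q)\ge1$, the left side at $u=1$ is at least $1+N_2(q)(t-1)/(2t)>1$, whence $u_m^{\ast}<1$ and $\gamma_m>1$ (for $q=2$ one has $T(z)=(m-1)z$ and $\gamma_m=1$ identically). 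The coefficients $\binom th/t^h=\frac1{h!}\prod_{j=0}^{h-1}(1-j/t)$ increase with $t$, so the polynomial increases with $m$ at each fixed $u>0$, its root $u_m^{\ast}$ decreases, and $\gamma_m$ increases monotonically with $m$ to a limit $\gamma_\infty=1/u_\infty^{\ast}$ fixed by $\sum_h N_h(q)u^h/h!=1$. Comparing with the trivial bound $N_h(q)\le(q-1)^{h-1}$ of (\ref{q-triv}), which is strict for $h\ge2$, the limiting polynomial lies strictly below $\frac1{q-1}(e^{(q-1)u}-1)$ for $u>0$; the latter equals $1$ at $u=\ln q/(q-1)$, so $u_\infty^{\ast}>\ln q/(q-1)$ and $\gamma_\infty<(q-1)/\ln q=\overline\gamma_\infty$, consistent with (\ref{eq:gamma-bnd3}).

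The substantive step is (a): one must verify that forcing every extension to be additively irreducible with respect to the single check currently being satisfied never loses the sought codeword — this is exactly what observations~A and~B secure, and it is also what legitimizes counting the runs by $S_{d-1}(m,q)$ rather than by the naive $(q-1)^{d-1}\binom nd$. Once that is in place, the rest — confirming that $z_0=\rho$ is the unique dominant simple root and establishing the monotonicity and the bounds on $\gamma_m$ — is routine generating-function asymptotics.
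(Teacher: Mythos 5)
Your proposal follows essentially the paper's own route for the core of the proposition: completeness of the IC search rests on Lemma \ref{th:basic} together with observations A--B (an irreducible target codeword always admits an additively irreducible extension satisfying the first unsatisfied check, so some run reconstructs it exactly), the number of runs is bounded by $n\,S_{d-1}(m,q)$ with $S_h(m,q)$ as in (\ref{eq:upper-bound}), and the exponent is extracted from the generating function (\ref{eq:gener1}) via the contour/residue argument (\ref{eq:2})--(\ref{eq:Sv-bound}), giving $F_{\mathrm{IC}}=\delta\log_2(\gamma_m(m-1))$ with $\gamma_m=1/[(m-1)\rho]$; all of this matches the paper. Where you genuinely add something is part (c): the paper proves monotonicity in $m$ and the bound $(q-1)/\ln q$ only for the coefficient $\overline{\gamma}_m$ obtained from the trivial bound (\ref{q-triv}), and supports the corresponding claims for the true $\gamma_m$ (with $N_v(q)$ from Lemma \ref{th:basic-1}) by the numerics of Table \ref{tab:even-odd}; your substitution $u=(m-1)z$, the observation that $\binom{t}{h}/t^{h}=\frac{1}{h!}\prod_{j=0}^{h-1}(1-j/t)$ increases with $t$, and the strict comparison of the limiting series with $\frac{1}{q-1}\bigl(e^{(q-1)u}-1\bigr)$ give an analytic proof that $\gamma_m$ increases with $m$, converges to the $\gamma_\infty$ determined by $\sum_h N_h(q)u^h/h!=1$, and that $\gamma_\infty<\overline{\gamma}_\infty=(q-1)/\ln q$. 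This is a valid and worthwhile tightening. Two cosmetic caveats: for $q=2$ you correctly note $\gamma_m\equiv1$, which sits at the boundary of the stated interval $(1,\gamma_\infty)$ (the strict lower bound implicitly assumes $q\geq3$), and the mention of Gaussian elimination inside a step is unnecessary --- the per-step work is just enumerating AI assignments on at most $m-1$ positions of the current check and updating the unsatisfied-check list --- but this does not affect the $\mathrm{poly}(n)$ per-run cost.
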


\begin{table}[tbh]
\centering
\begin{tabular}
[c]{c|ccccc}%
$v$ & $q=2$ & $q=3$ & $q=4$ & $q=5$ & $q=8_{\strut}$\\\hline
1 & 1 & 1 & 1 & 1 & 1$^{\strut}$\\
2 & 0 & 1 & 2 & 3 & 6\\
3 &  & 0 & 0 & 4 & 24\\
4 &  &  &  & 1 & 0$_{\strut}$\\
&  &  &  &  &
\end{tabular}
\caption{Number of additively-irreducible $q$-ary strings of length $v$ for
$q=p^{m}$. }%
\label{tab:Nv}%
\end{table}

\begin{table}[tbh]
\centering%
\begin{tabular}
[c]{c|ccccc}%
$m$ & $q=2$ & $q=3$ & $q=4$ & $q=5$ & $q=8_{\strut}$\\\hline
$3$ & 1 & 1.20711 & 1.36603 & 1.5 & 1.82288$^{\strut}$\\
$5$ & 1 & 1.29057 & 1.5 & 1.73311 & 2.27727\\
$10$ & 1 & 1.33333 & 1.56719 & 1.85548 & 2.50514\\
$10^{2}$ & 1 & 1.3631 & 1.61351 & 1.94162 & 2.66259\\
$10^{3}$ & 1 & 1.36574 & 1.61759 & 1.94927 & 2.67647\\
$\infty$ & 1 & 1.36603 & 1.61803 & 1.95011 & 2.67799$_{\strut}$\\%
\begin{tabular}
[c]{l}%
Upper bound\\
$(q-1)/\ln q_{\strut}$%
\end{tabular}
& 1.44270\  & 1.82048\  & 2.16404\  & 2.48534\  & 3.36629 $_{\strut}$%
\end{tabular}
\caption{Coefficient $\gamma_{m}$ of the complexity exponent $\delta\log
_{q}(\gamma_{m}(m-1))$ for different $m$ and $q$.}%
\label{tab:even-odd}%
\end{table}

\textit{Remarks.} The algorithm presented here for linear $q$-ary codes
generalizes an algorithm described in \cite{Dumer-Kovalev-Pryadko-bnd-2015}
for binary codes. It can be also applied to a more general class of $q$-ary
$(\ell,m)$-limited LDPC codes, whose parity check matrices have all columns
and rows of Hamming weights no more than $\ell$ and $m,$ respectively. This
algorithm is also valid for $q$-ary CSS codes, and gives the same complexity
exponent. However, for $q$-ary stabilizer codes, the numbers of additively
irreducible clusters (e.g., from Table \ref{tab:Nv}) have to be increased by
an additional factor of $q^{v}$, $N_{v}^{(\mathrm{stab})}(q)=q^{v}N_{v}(q)$.
As a result, the complexity exponents in Table \ref{tab:even-odd} also
increase, $\gamma_{m}^{(\mathrm{stab})}=q\gamma_{m}$. In particular, for qubit
stabilizer codes, $q=2$, we obtain complexity exponent $\gamma_{m}%
^{(\mathrm{qubit})}=2$.

Also, note that for the existing quantum LDPC codes with distance $d$ of order
$\sqrt{n},$ the presented IC algorithm has the lowest proven complexity among
deterministic algorithms. Indeed, exponent $F_{\mathrm{IC}}$ is linear in the
relative distance $\delta$, whereas deterministic techniques of
Sec.~\ref{sec:generic} give the higher exponents $F\rightarrow\delta
\log(1/\delta)$ in this limit. In this regard, exponent $F_{\mathrm{IC}}$
performs similarly to the CS exponent $F_{\mathrm{CS}}$ of generic codes,
which is bounded by $\delta-\delta\log_{2}(1-R)$ and is linear in $\delta$.

\section{Further extensions}

In this paper, we study provable algorithms of distance verification for LDPC
codes. More generally, this approach can be used for any ensemble of codes
with a given relative distance $\delta_{\ast}$ and erasure-correcting capacity
${\theta}_{\ast}$. \

One particular extension is any \ ensemble of irregular LDPC codes with known
parameters $\delta_{\ast}$ and ${\theta}_{\ast}.$ Note that parameter
${\theta}_{\ast}$ has been studied for both ML decoding and message-passing
decoding of irregular codes \cite{Luby-2001,Rich-Urb-2001,Pishro-2004}. For ML
decoding, this parameter can also be derived using the weight spectra obtained
for irregular codes in papers \cite{Di-2001,Litsyn-2003}. Also, these
techniques can be extended to ensembles of $q$-ary LDPC codes. The weight
spectra of some $q$-ary ensembles are derived in \cite{Til-2009,Yang-2011}.

Another direction is to design more advanced algorithms of distance
verification for LDPC codes. Most of such algorithms known to date for linear
$[n,k]$ codes combine the MB and CS techniques. In particular,  algorithm
\cite{Stern-1989} takes a linear $[n,k]$-code and seeks some high-rate punctured
$[k+\mu,k]$-block that has $\varepsilon\ll k$ errors among $k$ information
bits and $\mu$ error-free parity bits. The search is conducted similarly to
the CS technique. Then the MB technique corrects $\varepsilon$ errors in this
high-rate $[k+\mu,k]$-code. A slightly more efficient algorithm
\cite{Dumer-1991} simplifies this procedure and seeks punctured $[k+\mu
,k]$-code that has $\varepsilon\ll k+\mu$ errors spread across information and
parity bits. In this case, the optimal choice of parameters $\varepsilon$ and
$\mu$ reduces the maximum complexity exponent $F(R)$ to 0.1163. Later, this
algorithm was re-established  in \cite{Sendrier-2009, Bernstein-2011}, with
detailed applications for the McEliece cryptosystem. More recently, the
maximum complexity exponent $F(R)$ has been further reduced to 0.1019 using
some robust MB techniques that allow randomly overlapping partitions
\cite{Becker-2012}. An important observation is that both the MB and CS
techniques can be applied to LDPC codes; therefore, our conjecture is that
provable complexity bounds for distance verification also carry over to these
more advanced techniques.






\bibliographystyle{IEEEtran}
\bibliography{IEEEabrv,lpp,qc_all,more_qc,ldpc1,MyBIB}

\end{document}